\documentclass[pra, aps,letterpaper, twocolumn, preprintnumbers,superscriptaddress]{revtex4}
\usepackage{hyperref}
\usepackage{verbatim}
\usepackage{amsmath}
\usepackage{latexsym}
\usepackage{revsymb}
\usepackage{yfonts}

\usepackage{natbib}
\usepackage{amsfonts}
\usepackage{amsmath}
\usepackage{amssymb}
\usepackage{amsthm}
\usepackage{graphicx}
\usepackage{bm}
\usepackage{bbm}

\newcommand{\be}{\begin{eqnarray} \begin{aligned}}
\newcommand{\ee}{\end{aligned} \end{eqnarray} }
\newcommand{\benn}{\begin{eqnarray*} \begin{aligned}}
\newcommand{\eenn}{\end{aligned} \end{eqnarray*} }

\newcommand{\bc}{\begin{center}}
\newcommand{\ec}{\end{center}}

\newcommand{\id}{\mathbb{I}}

\newcommand{\tr}{\mathop{\mathsf{tr}}\nolimits}


\newtheorem{theorem}{Theorem}[section]

\newtheorem{lemma}[theorem]{Lemma}


\newcommand{\cS}{\mathcal{E}}
\newcommand{\cX}{\mathcal{X}}

\newcommand{\psuc}{p_{\rm succ}}
\newcommand{\psucPI}{p_{\rm succ}^{\rm PI}}
\newcommand{\vx}{{\vec{x}}}


\usepackage{amsfonts}

\def\id{\mathbb{I}}

\def\01{\{0,1\}}

\newcommand{\ket}[1]{|#1\rangle}

\newcommand{\proj}[1]{|#1\rangle\langle#1|}


\newcommand{\rank}{\operatorname{rank}}

\newcommand{\cB}{\mathcal{B}}
\newcommand{\cE}{\mathcal{E}}

\newenvironment{sdp}[2]{
\smallskip
\begin{center}
\begin{tabular}{ll}
#1 & #2\\
subject to
}
{
\end{tabular}
\end{center}
\smallskip
}

\newcommand{\ens}{\mathcal{E}}

\bibliographystyle{apsrev}

\begin{document}

\title{Using post-measurement information in state discrimination}
\author{Deepthi \surname{Gopal}}
\email[]{deepthi@caltech.edu}
\affiliation{Institute for Quantum Information, Caltech, Pasadena, CA 91125, USA}
\author{Stephanie \surname{Wehner}}
\email[]{wehner@caltech.edu}
\affiliation{Institute for Quantum Information, Caltech, Pasadena, CA 91125, USA}

\date{\today}
\begin{abstract}
We consider a special form of state discrimination in which after the measurement we are given additional information
that may help us identify the state.
This task plays a central role in the analysis of quantum
cryptographic protocols in the noisy-storage model, where the identity of the state corresponds to a certain bit string, 
and the additional information is typically a choice
of encoding that is initially unknown to the cheating party.
We first provide simple optimality conditions for measurements for any such problem, 
and show upper and lower bounds on the success probability.
For a certain class of problems, we furthermore provide tight bounds on how useful post-measurement information can be.
In particular, we show that for this class finding the optimal measurement for the task of
state discrimination \emph{with} post-measurement information does in fact reduce to solving a 
different problem of state discrimination \emph{without} such information. 
However, we show that for the corresponding \emph{classical} state discrimination problems with post-measurement information such a reduction is impossible,
by relating the success probability to the violation of Bell inequalities. This suggests the usefulness of post-measurement information as another 
feature that distinguishes the classical from a quantum world.
\end{abstract}
\maketitle

\section{Introduction}

One of the characteristic traits of quantum mechanics is that not all possible states of a physical
system are perfectly distinguishable. This is in stark contrast to the classical world, but enables us
to solve cryptographic problems such as key distribution~\cite{bb84,e91} or two-party computation
in the noisy-storage model~\cite{noisy:new,prl:noisy}. Nevertheless, it is often possible to gain partial knowledge
about the state. Imagine a physical system is prepared in one out of several possible states chosen with a certain probability.
The set of possible states, as well as the distribution are thereby known to us. The goal of \emph{state discrimination} is to identify
which state was chosen by performing a measurement on the system, whereby our aim is to choose measurements that 
maximize the average probability of success. 
This fundamental problem has been studied extensively for the past 30 years, starting with the works of Helstrom~\cite{helstrom},  
Holevo~\cite{holevo} and Belavkin~\cite{belavkin:optimal} (see~\cite{sarah:survey} for a survey of known result), 
and has found many applications in quantum information theory (see e.g.,~\cite{ogawa:converse}), cryptography~\cite{GRTZ:qkd_review},
and algorithms~\cite{bacon:optimal, moore:pgm}.

\begin{figure}
\includegraphics[scale=0.35]{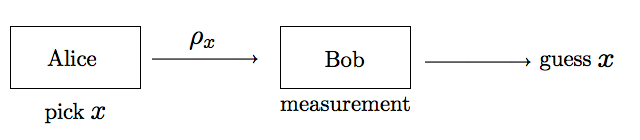}
\caption{Standard state discrimination}
\label{fig:stdgame}
\end{figure}

Here, we consider a special twist to the standard state discrimination problem introduced in~\cite{ww:pistar}, in which we obtain 
additional information after the measurement that may help us to identify the state. 
This task is easily described in terms of the following game depicted in
Figure~\ref{fig:game}: Imagine Alice chooses a state $\rho_{xb}$ from a finite set $\cS$ with probability $p_{xb}$,
labeled by what we will call the \emph{string} $x \in \cX$ and the \emph{encoding} $b \in \cB$. Bob knows $\cS$ as well as the distribution 
$P =  \{p_{xb}\}_{xb}$. Alice then sends the state to Bob. Bob may now perform any measurement
from which he obtains a classical measurement outcome $k$. Afterwards, Alice informs him about the encoding $b$. 
The task of \emph{state discrimination with post-measurement information} (and no memory) for Bob is to identify the string $x$, using the encoding $b$
and his classical measurement outcome $k$, where we are again interested in maximizing Bob's average probability of success over all
measurements he may 
perform~\footnote{Note that in~\cite{ww:pistar}, this problem was generalized to a setting where Bob may not 
only store classical information, but also a (limited) amount of quantum information. Here, however, we will only focus on the case of no storage
which was enough to relate security of the noisy-storage model to a coding problem~\cite{noisy:new}}. 
In~\cite{noisy:new} it was shown how bounds on this success probability can be used to prove security in the noisy-storage model. 

\begin{figure}
\includegraphics[scale=0.35]{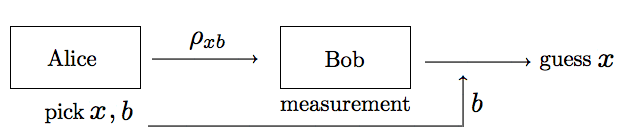}
\caption{Using post-measurement information}
\label{fig:game}
\end{figure}

Naturally, from a cryptographic standpoint it would be useful to know how much the additional information $b$ can actually help Bob. Let 
$\psucPI(\cS,P)$ and $\psuc(\cS,P)$ be the maximum average probabilities of success for the problem of state discrimination 
with and without post-measurement information respectively. 
Note that $\psucPI(\cS,P) \geq \psuc(\cS,P)$, since we can always choose to ignore any additional information.
We will measure how useful post-measurement information is for Bob in terms of the difference in his success
probability
\begin{align}
\Delta(\cS,P) := \psucPI(\cS,P) - \psuc(\cS,P)\ .
\end{align}
Of course, 
even in a classical setting post-measurement information can help Bob determine the string $x$. As a very simple example, suppose that
$x \in \{0,1\}$ is a single classical bit, and we have only one encoding $b \in \{0,1\}$. Imagine that Alice chooses $x$ and 
one of the two encodings uniformly at random and sends Bob the bit $x \oplus b = x + b \mod 2$. 
The states corresponding to this encoding are thus given by
\begin{align}
\rho_{xb} = \proj{x \oplus b}\ ,
\end{align}
where $p_{xb} = 1/4$.
Note that Bob now has a randomly chosen bit in his
possession and hence $\psuc(\cS,P) = 1/2$. However, he can decode correctly once he receives the additional information $b$ and
thus $\psucPI(\cS,P) = 1$, giving us $\Delta(\cS,P) = 1/2$.
As has been shown in~\cite{ww:pistar} we always have $\psucPI(\cS,P) = 1$ in the classical world where all states $\rho_{xb}$ are diagonal in the same
basis and orthogonal for fixed $b$. 

\subsection{Results}

We first provide a general condition for checking the optimality of measurements for our task (see Section~\ref{sec:optimality}). 
It was shown in~\cite{ww:pistar} that the optimal measurments can be found numerically using semidefinite programming solvers, however
in higher dimensions this remains prohibitively expensive.
We then focus on the case which is particularly interesting for cryptography, 
namely when the string $x$ is chosen uniformly and independently from the encoding $b$.
First, we provide upper and lower bounds for the success probability $\psucPI$ (Section~\ref{sec:lowerBound} and~\ref{sec:upperBound}).

In Section~\ref{sec:tightBounds}, we then show that for a large class of encodings (so-called \emph{Clifford encodings}) our lower bound is in fact tight. 
We thereby explicitely provide the optimal measurements for Clifford encodings.
The class of encodings we consider includes any encodings into two orthogonal pure states in dimension $d=2$
such as the well-known BB84 encodings~\cite{bb84}, as well as the case where we have two possible strings
and encodings which can be reduced to a problem in dimension $d=2$~\cite{halmos:blocks,ww:pistar}.  
It was previously observed that for BB84 encodings post-measurement information was useless~\cite{ww:pistar}. Here, we see that
this is no mere accident, and give a general condition for when post-measurement information is useless for Clifford encodings. 
We continue by showing that for Clifford encodings, we
can always perform a relabeling of the strings $x$ depending on the encoding $b$ such that we obtain a new problem for which post-measurement information is indeed useless.
This is particularly appealing from a cryptographic perspective as it means the adversary cannot gain any additional knowledge from the post-measurement information.
This means that for Clifford encodings we no longer need to treat the problem \emph{with} post-measurement information any differently, and can instead apply the well-studied
machinery of state discrimination. 

However, we will see that a relabeling that renders post-measurement information useless 
is impossible when considering a \emph{classical} ensemble~\footnote{An ensemble is classical if the states $\rho_{xb}$ all commute.}.
In particular, we will see that as long as we are able to gain 
some information about the encoded string $x$ without waiting for the post-measurement information,
then classically we cannot hope to find a non-trivial relabeling that makes post-measurement information useless. We thereby focus on the case of encodings a single bit into two possible
encodings in detail. Curiously, we will show this 
by relating the problem to Bell inequalities~\cite{bell}, such as for example the well-known
CHSH inequality~\cite{chsh}.
This suggests that the usefulness of post-measurement information forms another intriguing property that distinguishes the quantum from the classical world.

\section{General bounds}\label{sec:bounds}

Before investigating the use of post-measurement information, we derive general conditions for the optimality of measurements for our task.
We also provide a general bound on the success probability when the distribution over $\cX$ is uniform (i.e., $p_x = 1/|\cX|$) and independent of the choice
of encoding.

\subsection{SDP formalism}
When considering state discrimination with post-measurement information, we can without loss of generality assume that Bob performs
a measurement whose outcomes correspond to vectors
$\vec{x} = (x^{(1)},\ldots,x^{(L)}) \in \cX^{\times L}$ where each entry corresponds to the answer that Bob will give when he later learns
which one of the $L = |\cB|$ possible encodings was used. That is, when the encoding was $b$, Bob will output the guess $x^{(b)}$ of the vector $\vec{x}$~\cite{ww:pistar}.
In~\cite{ww:pistar} it was noted that the average probability that Bob outputs the correct
guess $x^{(b)}$ when given the post-measurement information $b$ maximized over all possible measurements (POVMs)
can be computed by solving the following semidefinite program (SDP). The primal of this SDP is given by
\begin{sdp}{maximize}{$v_{\rm primal} = \sum_{\vec{x}} \tr\left(M_{\vec{x}} \tau_{\vec{x}}\right)$}
& $\forall \vec{x} \in \cX^{\times L}, M_{\vec{x}} \geq 0$ \\
& $\sum_{\vec{x}} M_{\vec{x}} = \id$\ ,
\end{sdp}
where
\begin{align}\label{eq:avgState}
\tau_{\vec{x}} &= \sum_{b = 1}^{L} p_{x^{(b)} b}\ \rho_{x^{(b)}b}\ .
\end{align} 
By forming the Lagrangian, we can easily compute the dual of this SDP (see e.g.~\cite[Appendix A]{steph:diss}) which is given by
\begin{sdp}{minimize}{$v_{\rm dual} = \tr(Q)$}
& $\forall \vec{x} \in \cX^{\times L}, Q \geq \tau_{\vec{x}}$\ .
\end{sdp}
SDPs can be solved in polynomial time (in the input size) using standard algorithms~\cite{boyd:book}, 
which also provide us with the optimal measurement operators. 

\subsection{Optimality conditions}\label{sec:optimality}
However, with the SDP formalism in mind, it is now also easy to provide necessary and sufficient conditions for when a set of 
measurement operators $\{M_{\vec{x}}\}_{\vec{x}}$ 
is in fact optimal. Similar conditions were derived for the case of state discrimination \emph{without} post-measurement
information~\cite{croke:conditions,yuen:maxState,holevo:maxState,holevo:remarks,belavkin:optimal,belavkin:optimal2}. 
A proof can be found in the appendix.

\begin{lemma}\label{lem:conditions}
A POVM with operators $\{M_{\vec{x}}\}_{\vec{x}}$ is optimal for state discrimination with post-measurement information for the ensemble
$\cE = \{p_{xb}, \rho_{xb}\}$ if and only if the following two conditions hold:
\begin{enumerate}
\item $Q := \sum_{\vec{x}} \tau_{\vec{x}} M_{\vec{x}}$ is Hermitian.
\item $Q \geq \tau_{\vec{x}}$ for all $\vec{x} \in \cX^{\times L}$.
\end{enumerate}
\end{lemma}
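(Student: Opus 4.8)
The plan is to derive both conditions from the duality structure of the SDP stated above. Recall that for the primal SDP with optimal value $v_{\rm primal}$ and the dual SDP with optimal value $v_{\rm dual}$, strong duality holds (Slater's condition is trivially satisfied, e.g. take $M_{\vec{x}} = \id/|\cX^{\times L}|$ for the primal and $Q = c\,\id$ with $c$ large for the dual), so $v_{\rm primal} = v_{\rm dual}$. Given a candidate POVM $\{M_{\vec{x}}\}$, I would first show that the matrix $Q := \sum_{\vec{x}} \tau_{\vec{x}} M_{\vec{x}}$ is the natural dual-variable candidate: using $\sum_{\vec{x}} M_{\vec{x}} = \id$ and the cyclicity of the trace, one has $\tr(Q) = \sum_{\vec{x}} \tr(\tau_{\vec{x}} M_{\vec{x}}) = v_{\rm primal}$ whenever the $M_{\vec{x}}$ are primal-optimal, so $Q$ would have to be dual-optimal — but for $Q$ to even be a dual-feasible point it must be Hermitian (the dual insists $Q \geq \tau_{\vec{x}}$, which presupposes $Q = Q^\dagger$), and it must satisfy $Q \geq \tau_{\vec{x}}$ for all $\vec{x}$. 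This gives the ``only if'' direction: optimality forces conditions 1 and 2.

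For the ``if'' direction, suppose $\{M_{\vec{x}}\}$ is a POVM and that $Q = \sum_{\vec{x}} \tau_{\vec{x}} M_{\vec{x}}$ is Hermitian with $Q \geq \tau_{\vec{x}}$ for all $\vec{x}$. Then $Q$ is dual-feasible, so $v_{\rm dual} \leq \tr(Q)$. On the other hand $\{M_{\vec{x}}\}$ is primal-feasible, so $v_{\rm primal} \geq \sum_{\vec{x}} \tr(M_{\vec{x}} \tau_{\vec{x}})$. The key algebraic step is
\begin{align}
\tr(Q) = \tr\!\Big(\sum_{\vec{x}} \tau_{\vec{x}} M_{\vec{x}}\Big) = \sum_{\vec{x}} \tr(\tau_{\vec{x}} M_{\vec{x}}) = \sum_{\vec{x}} \tr(M_{\vec{x}} \tau_{\vec{x}})\ ,
\end{align}
where in the last equality I just used that the trace of a product of two operators is symmetric (no Hermiticity needed here). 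Hence
\begin{align}
v_{\rm primal} \geq \sum_{\vec{x}} \tr(M_{\vec{x}} \tau_{\vec{x}}) = \tr(Q) \geq v_{\rm dual} = v_{\rm primal}\ ,
\end{align}
forcing equality throughout. In particular $\sum_{\vec{x}} \tr(M_{\vec{x}} \tau_{\vec{x}}) = v_{\rm primal}$, so $\{M_{\vec{x}}\}$ achieves the primal optimum, i.e.\ it is optimal.

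One subtlety I would be careful about: in the ``only if'' direction I want to assert that $Q$, built from an arbitrary optimal primal solution, is automatically Hermitian — this is \emph{not} obvious a priori, since $\tau_{\vec{x}} M_{\vec{x}}$ need not be Hermitian even though each factor is. The argument must instead go: by strong duality there \emph{exists} a dual-optimal $Q^\star$ (Hermitian, feasible) with $\tr(Q^\star) = v_{\rm primal}$; complementary slackness then gives $(Q^\star - \tau_{\vec{x}}) M_{\vec{x}} = 0$ for every $\vec{x}$, so $\sum_{\vec{x}} \tau_{\vec{x}} M_{\vec{x}} = \sum_{\vec{x}} Q^\star M_{\vec{x}} = Q^\star \sum_{\vec{x}} M_{\vec{x}} = Q^\star$, whence $Q = Q^\star$ is indeed Hermitian and feasible. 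Establishing this complementary-slackness identity cleanly — taking traces of $M_{\vec{x}}^{1/2}(Q^\star - \tau_{\vec{x}})M_{\vec{x}}^{1/2} \geq 0$ and summing against $\tr\big(Q^\star - \sum_{\vec{x}}\tau_{\vec{x}}M_{\vec{x}}\big) = 0$ to conclude each term vanishes — is the main technical point, and it mirrors the standard proof of the Holevo--Yuen conditions for ordinary state discrimination. The rest is bookkeeping with POVM normalization and trace cyclicity.
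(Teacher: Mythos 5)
Your proposal is correct and follows essentially the same route as the paper's proof: the ``if'' direction via weak/strong duality with $Q$ as a dual-feasible point whose trace matches the primal value, and the ``only if'' direction via Slater's condition and complementary slackness, identifying $\sum_{\vec{x}} \tau_{\vec{x}} M_{\vec{x}}$ with the (Hermitian, feasible) dual optimum. The subtlety you flag --- that Hermiticity of $Q$ is not a priori obvious and must be obtained by showing $Q$ equals the dual-optimal solution --- is exactly how the paper's argument resolves it.
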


\subsection{Upper bound}\label{sec:upperBound}

We now derive a simple upper bound on the success probability of state discrimination with post-measurement information when $p_{xb} = p_x p_b$ is a product
distribution, and the string $x$ is chosen uniformly at random (i.e.,$p_x = 1/|\cX|$). We will use a trick employed by Ogawa and Nagaoka~\cite{ogawa:converse} in the context of 
channel coding which was later rediscovered in the context of state discrimination~\cite{tyson:estimates}. A proof can be found in the appendix.

\begin{lemma}
Let $N = |\cX|$ be the number of possible strings, and suppose that the joint distribution over strings and encodings satisfies
$p_{xb} = p_b/N$, where the distribution $\{p_b\}_b$ is arbitrary. Then
\begin{align}
\psucPI(\cE,P) \leq \frac{1}{N} \tr\left[\left(\sum_{\vec{x}} \rho_{\vec{x}}^\alpha\right)^{1/\alpha}\right]\ ,
\end{align}
for all $\alpha > 1$, where $\cE = \{\rho_{xb}\}_{xb}$, $P = \{p_{xb}\}_{xb}$ and $\rho_{\vec{x}} = \sum_{b=1}^L p_b\ \rho_{x^{(b)} b}$.
\end{lemma}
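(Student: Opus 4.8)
The plan is to combine the SDP formulation of $\psucPI$ recalled in Section~\ref{sec:bounds} with an operator‑monotonicity argument of the type used by Ogawa and Nagaoka. First I would note that under the hypothesis $p_{xb}=p_b/N$ the average states entering the primal SDP are precisely
\begin{align}
\tau_{\vec{x}} \;=\; \sum_{b=1}^{L} p_{x^{(b)}b}\,\rho_{x^{(b)}b} \;=\; \frac1N\sum_{b=1}^{L} p_b\,\rho_{x^{(b)}b} \;=\; \frac1N\,\rho_{\vec{x}}\ ,
\end{align}
so that, using that the primal optimum equals $\psucPI$,
\begin{align}
\psucPI(\cE,P) \;=\; \max_{\{M_{\vec{x}}\}} \sum_{\vec{x}} \tr\!\left(M_{\vec{x}}\,\tau_{\vec{x}}\right) \;=\; \frac1N \max_{\{M_{\vec{x}}\}} \sum_{\vec{x}} \tr\!\left(M_{\vec{x}}\,\rho_{\vec{x}}\right)\ ,
\end{align}
the maxima running over all POVMs $\{M_{\vec{x}}\}_{\vec{x}\in\cX^{\times L}}$. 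It therefore suffices to show that $\sum_{\vec{x}} \tr(M_{\vec{x}}\rho_{\vec{x}}) \le \tr\big[(\sum_{\vec{x}}\rho_{\vec{x}}^{\alpha})^{1/\alpha}\big]$ for every such POVM and every $\alpha>1$.

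For the key step set $A := \sum_{\vec{x}}\rho_{\vec{x}}^{\alpha}\ge 0$. Fix $\vec{y}$; since every summand $\rho_{\vec{x}}^{\alpha}$ is positive semidefinite, $A-\rho_{\vec{y}}^{\alpha}=\sum_{\vec{x}\neq\vec{y}}\rho_{\vec{x}}^{\alpha}\ge 0$, i.e.\ $\rho_{\vec{y}}^{\alpha}\le A$. Because $\alpha>1$ we have $1/\alpha\in(0,1)$, so the function $t\mapsto t^{1/\alpha}$ is operator monotone on $[0,\infty)$ by the L\"owner--Heinz theorem; applying it to $\rho_{\vec{y}}^{\alpha}\le A$ and using that $(\rho_{\vec{y}}^{\alpha})^{1/\alpha}=\rho_{\vec{y}}$ on positive semidefinite operators gives
\begin{align}
\rho_{\vec{y}} \;=\; \big(\rho_{\vec{y}}^{\alpha}\big)^{1/\alpha} \;\le\; A^{1/\alpha}\ .
\end{align}
Conjugating by $M_{\vec{y}}^{1/2}\ge 0$ preserves the ordering, so $\tr(M_{\vec{y}}\rho_{\vec{y}})=\tr\big(M_{\vec{y}}^{1/2}\rho_{\vec{y}}M_{\vec{y}}^{1/2}\big)\le\tr\big(M_{\vec{y}}^{1/2}A^{1/\alpha}M_{\vec{y}}^{1/2}\big)=\tr(M_{\vec{y}}A^{1/\alpha})$. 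Summing over $\vec{y}$ and using $\sum_{\vec{y}}M_{\vec{y}}=\id$ yields $\sum_{\vec{y}}\tr(M_{\vec{y}}\rho_{\vec{y}})\le\tr(A^{1/\alpha})$; dividing by $N$ and taking the maximum over measurements gives the claim.

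I do not expect a genuine obstacle here — the estimate is short once the right inequality is isolated. The two points that need care are (i) that $(\rho_{\vec{y}}^{\alpha})^{1/\alpha}=\rho_{\vec{y}}$, which holds because the two power maps are inverse to one another on the positive cone, and (ii) that operator monotonicity of the power function is exactly what forces the hypothesis $\alpha>1$: for exponents exceeding $1$ the map $t\mapsto t^{r}$ is not operator monotone, so the implication $\rho_{\vec{y}}^{\alpha}\le A\Rightarrow\rho_{\vec{y}}\le A^{1/\alpha}$ would break down. One could equivalently invoke the Ogawa--Nagaoka bound in the form used by Tyson as a black box, but carrying out the L\"owner--Heinz step explicitly makes the role of $\alpha$ transparent.
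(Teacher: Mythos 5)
Your proposal is correct and follows essentially the same route as the paper's proof: both identify $\tau_{\vec{x}}=\rho_{\vec{x}}/N$, apply operator monotonicity of $t\mapsto t^{1/\alpha}$ (L\"owner--Heinz) to get $\rho_{\vec{x}}\le\bigl(\sum_{\vec{x}}\rho_{\vec{x}}^{\alpha}\bigr)^{1/\alpha}$, and then sum against the POVM using $\sum_{\vec{x}}M_{\vec{x}}=\id$. Your explicit conjugation by $M_{\vec{y}}^{1/2}$ is just a more careful justification of the trace inequality that the paper states directly.
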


Note that the bound on the r.h.s contains very many terms, and yet our normalization factor is only $1/N$. Nevertheless, for many interesting
examples we can obtain a useful bound this way, by choosing $\alpha$ to be sufficiently large.

\subsection{Lower bound}\label{sec:lowerBound}

Similarly, if $x$ is chosen uniformly at random and independent of the encoding, we can find a lower bound to $\psucPI$. 
The idea behind this lower bound is to subdivide the problem into a set of smaller problems which we can solve using standard
techniques from state discrimination.
Note that without loss of generality, we can label the elements of $\cX$ that we wish to encode from $0,\ldots,N-1$, where
we let $N = |\cX|$. The vector $\vec{x}$ can thus be written analogously as a vector $\vec{x} \in \{0,\ldots,N-1\}^{\times L}$.
We now partition the set of all possible such vectors as follows. Consider a shorter vector of length $L-1$, that is,
$\vec{y} \in \{0,\ldots,N-1\}^{\times (L-1)}$. With every such vector, we associate the partition
\begin{align}
T_{\vec{y}} &=
\{ \vec{x} = (y^{(1)} + j \mod N,\ldots,y^{(L-1)} + j \mod N,\\
&\qquad 0 + j \mod N)
 \mid j \in \{0,\ldots,N-1\}\}\ .\nonumber
\end{align}
Note that $|T_{\vec{y}}| = N$ and if $\vec{y} \neq \vec{\hat{y}}$ we have $T_{\vec{y}} \cap T_{\vec{\hat{y}}} = \emptyset$.
The union of all such partitions gives us the set of all possible vectors $\vec{x}$, that is,
\begin{align}
\bigcup_{\vec{y}} T_{\vec{y}} = \{\vec{x} \mid \vec{x} \in \{0,\ldots,N-1\}^{\times L}\}\ .
\end{align}
With every partition $T_{\vec{y}}$ we can now associate a standard state discrimination problem \emph{without} post-measurement information
in which we try to discriminate states 
\begin{align}\label{eq:avgXState}
	\rho_{\vec{x}} := \sum_{b=1}^L p_b \rho_{x^{(b)} b}\ ,
\end{align}
such that $\vec{x} \in T_{\vec{y}}$. 
That is, the set of states is given by $\ens_{T_{\vec{y}}} = \{\rho_{\vec{x}} \mid \vec{x} \in T_{\vec{y}}\}$ and $p_{\vec{x}} = 1/N$ is the uniform distribution.
Note that the original problem of state discrimination where we do not receive any post-measurement information corresponds to the partition given by
$\vec{y} = (0,\ldots,0)$, where we always give the same answer no matter what the post-measurement information is going to be. 
As we show in the appendix

\begin{lemma}\label{lem:lowerBound}
The success probability \emph{with} post-measurement information is at least as large as the success probability of 
a derived problem \emph{without} post-measurement information, i.e.,
$$
\psucPI(\ens,P) \geq \max_{\vec{y}} \psuc(\ens_{T_{\vec{y}}},P)\ .
$$
\end{lemma}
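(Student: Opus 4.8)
The plan is to show that any measurement solving one of the derived problems $\psuc(\ens_{T_{\vec{y}}},P)$ can be promoted to a measurement for the original problem \emph{with} post-measurement information without decreasing the success probability. Fix a partition index $\vec{y}$ and suppose $\{M_{\vec{x}}\}_{\vec{x}\in T_{\vec{y}}}$ is an optimal POVM for discriminating the states $\rho_{\vec{x}}$, $\vec{x}\in T_{\vec{y}}$, under the uniform prior $p_{\vec{x}} = 1/N$. Its success probability is $\psuc(\ens_{T_{\vec{y}}},P) = \frac{1}{N}\sum_{\vec{x}\in T_{\vec{y}}} \tr(M_{\vec{x}}\,\rho_{\vec{x}})$. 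Because $T_{\vec{y}}$ has exactly $N$ elements and these are indexed bijectively by the shift $j\in\{0,\ldots,N-1\}$, the operators $\{M_{\vec{x}}\}_{\vec{x}\in T_{\vec{y}}}$ already form a complete POVM on $\hil$; I will simply relabel each $M_{\vec{x}}$ by its vector $\vec{x}\in\cX^{\times L}$ and set $M_{\vec{x}'} = 0$ for every $\vec{x}'\notin T_{\vec{y}}$. This is a valid POVM indexed by all of $\cX^{\times L}$, hence a feasible measurement for the SDP defining $\psucPI$.

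The key step is then to compare objective values. In the SDP for $\psucPI$ the objective is $\sum_{\vec{x}} \tr(M_{\vec{x}}\,\tau_{\vec{x}})$ with $\tau_{\vec{x}} = \sum_{b=1}^{L} p_{x^{(b)}b}\,\rho_{x^{(b)}b}$. Under the hypothesis $p_{xb} = p_b/N$ we have $\tau_{\vec{x}} = \frac{1}{N}\sum_{b=1}^{L} p_b\,\rho_{x^{(b)}b} = \frac{1}{N}\rho_{\vec{x}}$ with $\rho_{\vec{x}}$ exactly as in~\eqref{eq:avgXState}. Therefore, for the lifted measurement,
\begin{align}
\sum_{\vec{x}\in\cX^{\times L}} \tr(M_{\vec{x}}\,\tau_{\vec{x}}) = \sum_{\vec{x}\in T_{\vec{y}}} \tr(M_{\vec{x}}\,\tau_{\vec{x}}) = \frac{1}{N}\sum_{\vec{x}\in T_{\vec{y}}} \tr(M_{\vec{x}}\,\rho_{\vec{x}}) = \psuc(\ens_{T_{\vec{y}}},P)\ .
\end{align}
Since this lifted measurement is feasible for the $\psucPI$ SDP, its objective is a lower bound on the optimum, giving $\psucPI(\ens,P) \geq \psuc(\ens_{T_{\vec{y}}},P)$. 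As $\vec{y}$ was arbitrary, we may take the maximum over $\vec{y}$, which yields the claim.

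The only point requiring care — and the main obstacle, such as it is — is verifying that the partition $T_{\vec{y}}$ really does have size exactly $N$ with the $j\mapsto\vec{x}$ map a bijection, so that $\{M_{\vec{x}}\}_{\vec{x}\in T_{\vec{y}}}$ is genuinely \emph{complete} (sums to $\id$) rather than merely sub-normalized; this is immediate from the definition of $T_{\vec{y}}$ since the last coordinate $0 + j \bmod N$ already runs over all $N$ values as $j$ does. One should also note the consistency check mentioned in the excerpt: for $\vec{y} = (0,\ldots,0)$ the states $\rho_{\vec{x}}$ with $\vec{x}\in T_{(0,\ldots,0)}$ are $\rho_{\vec{x}} = \sum_b p_b\,\rho_{(j\bmod N)b}$, i.e.\ the same guess $j$ is given regardless of $b$, which is precisely state discrimination \emph{without} post-measurement information, recovering the trivial bound $\psucPI \geq \psuc$. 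Everything else is bookkeeping with the product-distribution assumption $p_{xb} = p_b/N$.
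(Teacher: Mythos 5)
Your proposal is correct and follows essentially the same route as the paper: the paper's one-line proof simply notes that the full objective decomposes over the partitions $T_{\vec{y}}$, and the lower bound follows by taking a measurement supported on a single partition (extended by zeros), exactly as you do. Your version just spells out the feasibility of the lifted POVM and the identity $\tau_{\vec{x}} = \rho_{\vec{x}}/N$ more explicitly than the paper does.
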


In particular, this allows us to apply any known lower bounds for the standard task of state discrimination~\cite{tyson:newIterate} to this problem.
Curiously, we will see that there exists a large class of problems for which this bound is tight, even though $\Delta(\ens,P) > 0$, that is, even though post-measurement
information is useful.

\section{Tight bounds for special encodings}\label{sec:tightBounds}

We now consider a very special class of problems called \emph{Clifford encodings},  for which we can determine the optimal measurement explicitly.
In this problem, we will only ever encode a single bit $x \in \{0,1\}$ chosen uniformly at random independent of the choice of encoding, 
and take $d = 2^{n}$ dimensional states of the form
\begin{align}
\rho_{xb} = \frac{1}{d}\left(\id + \sum_{j=1}^{2n+1} \gamma_{xb}^{(j)} \Gamma_j\right)\ , 
\end{align}
where $\Gamma_1,\ldots,\Gamma_{2n+1}$ are generators of the Clifford algebra, that is, anti-commuting operators~\footnote{That is $\{\Gamma_j,\Gamma_k\} = \Gamma_j \Gamma_k + \Gamma_k \Gamma_j = 0$ for $j \neq k$.}
satisfying $(\Gamma_j)^2 = \id$ for all $j$. We also assume that the vector 
$\gamma_{xb} = (\gamma_{xb}^{(1)},\ldots,\gamma_{xb}^{(2n+1)})$ satisfies $\gamma_{xb} = - \gamma_{(1-x)b}$ and
$\|\gamma_{xb}\|_2 \leq 1$. The distribution over encodings can be arbitrary.
Using the fact that the operators anti-commute, it is not hard to see that $\tr(\Gamma_j \Gamma_k) = 0$ for $j \neq k$ and the latter condition then ensures that
$\rho_{xb}$ is a valid quantum state~\cite{ww:cliffordUR}, that is,
$\rho_{xb}$ is positive semi-definite satisfying $\tr(\rho_{xb}) = 1$.
The Clifford algebra has a unique representation by
Hermitian matrices on $n$ qubits (up to unitary equivalence) which we fix henceforth.
This representation can be obtained via the famous Jordan-Wigner transformation~\cite{JordanWigner}:
\begin{align*}
  \Gamma_{2j-1} &= Y^{\otimes(j-1)} \otimes Z \otimes \id^{\otimes(n-j)}, \\
  \Gamma_{2j}   &= Y^{\otimes(j-1)} \otimes X \otimes \id^{\otimes(n-j)},
\end{align*}
for $j=1,\ldots,n$, where we use $X$, $Y$ and $Z$ to denote the Pauli matrices.
We also use $\Gamma_{2n+1} = i \Gamma_1\ldots\Gamma_{2n}$.

Note that in dimension $d=2$, these operators are simply the Pauli matrices
$\Gamma_1 = Z$, $\Gamma_2 = X$ and $\Gamma_{2n+1)} = Y$ and \emph{any} encoding of the bit $x$ into two orthogonal pure states
is of the above form.
A simple example, is the BB84 encoding~\cite{bb84} where we encode the bit $x$ into the computational basis labeled by $b = 0$ and into the Hadamard basis labeled by $b=1$.
Furthermore, if we have only two possible strings and encodings, we can always reduce the problem to dimension $d=2$~\cite{halmos:blocks,ww:pistar}.
In higher dimensions, encodings of the above form were suggested for the use in cryptographic protocols~\cite{ww:cliffordUR}.

\subsection{Without post-measurement information}

We now first examine the setting of state discrimination \emph{without} post-measurement information, which will provide
us with the necessary intuition. Again, we use $L = |\cB|$ to denote the number of possible encodings. 
Recall the average state $\rho_{\vec{x}}$ from~\eqref{eq:avgXState}
for the vector $\vec{x} = (x^{(1)}, \ldots,x^{(L)})$, which tells us for every possible encoding which bit appears in the sum.
We furthermore define the complementary vector $\vec{\underline{x}} = ((1-x^{(1)}),\ldots,(1 - x^{(L)}))$, that is, $\vec{x} + \vec{\underline{x}} = 0$.
As a warmup, suppose we are given $\rho_{\vec{x}}$ and $\rho_{\vec{\underline{x}}}$ chosen uniformly at random and wish to determine which one. 
Clearly, this is an example of state discrimination \emph{without} post-measurement information, which can also be written as an SDP~\cite{yuen:maxState,eldar:sdp}. The primal is
of the form
\begin{sdp}{maximize}{$\frac{1}{2}\left(\tr(M_{\vec{x}}\rho_{\vec{x}}) + \tr(M_{\vec{\underline{x}}} \rho_{\vec{\underline{x}}})\right)$}
&$M_{\vec{x}} \geq 0$\ ,\\
&$M_{\vec{\underline{x}}} \geq 0$\, \\
&$M_{\vec{x}} + M_{\vec{\underline{x}}} = \id$\ .
\end{sdp}
Its dual is easily found to be
\begin{sdp}{minimize}{$\tr(Q)$}
	&$Q \geq \frac{1}{2}\rho_{\vec{x}}$\ ,\\
	&$Q \geq \frac{1}{2}\rho_{\vec{\underline{x}}}$\ .
\end{sdp}
Analogous to Lemma~\ref{lem:conditions} with $\tau_{\vec{x}} = \frac{1}{2} \rho_{\vec{x}}$ one can derive optimality conditions which for the case of state discrimination
were previously obtained in~\cite{croke:conditions,yuen:maxState,holevo:maxState,holevo:remarks,belavkin:optimal,belavkin:optimal2}. In our case they tell us that
$Q = \frac{1}{2}(\rho_{\vec{x}} M_{\vec{x}} + \rho_{\vec{\underline{x}}} M_{\vec{\underline{x}}})$ must be Hermitian, and $Q$ is a feasible dual solution.
All we have to do is thus to guess an optimal measurement, and use these conditions to prove its optimality. 
Consider the operators 
\begin{align}
\label{eq:meas}
M_{\vec{x}} &= \frac{1}{2}\left(\id + \sum_j a^{(j)}_{\vec{x}} \Gamma_j\right)\ ,\\
M_{\vec{\underline{x}}} &= \frac{1}{2}\left(\id - \sum_j a^{(j)}_{\vec{x}} \Gamma_j\right)\ ,\nonumber
\end{align}
where $\vec{a}_{\vec{x}} = \vec{v}_{\vec{x}}/\|\vec{v}_{\vec{x}}\|_2$
is the normalized average vector
\begin{align}
\vec{v}_{\vec{x}} = \sum_{b=1}^L p_b \gamma_{x^{(b)}b}\ .
\end{align}
Note that since the generators of the Clifford algebra anti-commute, we have that
$M_{\vec{x}}, M_{\vec{\underline{x}}} \geq 0$ and $M_{\vec{x}} + M_{\vec{\underline{x}}} = \id$.
Hence, these operators do form a valid measurement. In the appendix, we derive two lemmas 
which show that 
$Q  = \frac{1}{2}(\rho_{\vec{x}} M_{\vec{x}} + \rho_{\vec{\underline{x}}} M_{\vec{\underline{x}}})$ is Hermitian (Lemma~\ref{lem:Qsum}) and
satisfies $Q \geq \frac{1}{2}\rho_{\vec{x}}$ for all $\vec{x}$ (Lemma~\ref{lem:Qsum} and~\ref{lem:eigenvalues})~\footnote{Recall that
for any Hermitian operator we have $\lambda_{\rm max}(A) \id \geq A$, where $\lambda_{\rm max}(A)$ is the largest eigenvalue of $A$.}, 
which are the conditions we needed for optimality.
All proofs can be found in the appendix.

\begin{theorem}
The measurements given in~\eqref{eq:meas} are optimal to discriminate $\rho_{\vec{x}}$ from $\rho_{\vec{\underline{x}}}$ chosen
with equal probability.
\end{theorem}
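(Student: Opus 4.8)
The plan is to verify the two optimality conditions stated just above the theorem (the two-outcome analogue of Lemma~\ref{lem:conditions} with $\tau_{\vec{x}} = \tfrac12\rho_{\vec{x}}$): that $Q = \tfrac12\big(\rho_{\vec{x}} M_{\vec{x}} + \rho_{\vec{\underline{x}}} M_{\vec{\underline{x}}}\big)$ is Hermitian, and that $Q \geq \tfrac12\rho_{\vec{x}}$ and $Q \geq \tfrac12\rho_{\vec{\underline{x}}}$. Everything reduces to three elementary Clifford facts: $\Gamma_j^2 = \id$, $\Gamma_j\Gamma_k = -\Gamma_k\Gamma_j$ for $j\neq k$, and — as an immediate consequence — $\big(\sum_j w^{(j)}\Gamma_j\big)^2 = \|\vec{w}\|_2^2\,\id$ for every real vector $\vec{w}$. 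First I would record that $\rho_{\vec{x}} = \tfrac1d\big(\id + \sum_j v_{\vec{x}}^{(j)}\Gamma_j\big)$ with $\vec{v}_{\vec{x}} = \sum_b p_b\gamma_{x^{(b)}b}$, and that $\gamma_{(1-x)b} = -\gamma_{xb}$ forces $\vec{v}_{\vec{\underline{x}}} = -\vec{v}_{\vec{x}}$, so $\rho_{\vec{\underline{x}}} = \tfrac1d\big(\id - \sum_j v_{\vec{x}}^{(j)}\Gamma_j\big)$. (If $\vec{v}_{\vec{x}} = 0$ the two states coincide, any measurement is trivially optimal, so we may assume $\vec{v}_{\vec{x}}\neq 0$ and $\vec{a}_{\vec{x}} = \vec{v}_{\vec{x}}/\|\vec{v}_{\vec{x}}\|_2$ is well defined.)

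For condition 1, I would multiply out $\rho_{\vec{x}}M_{\vec{x}}$ and $\rho_{\vec{\underline{x}}}M_{\vec{\underline{x}}}$ and add. The terms linear in a single $\Gamma_j$ carry opposite signs in the two products and cancel. In the bilinear part $\sum_{j,k} v_{\vec{x}}^{(j)}a_{\vec{x}}^{(k)}\Gamma_j\Gamma_k$, the diagonal $j=k$ contribution is $(\vec{v}_{\vec{x}}\cdot\vec{a}_{\vec{x}})\,\id = \|\vec{v}_{\vec{x}}\|_2\,\id$, while the off-diagonal contribution vanishes: since $\vec{a}_{\vec{x}}$ is parallel to $\vec{v}_{\vec{x}}$, the coefficient $v_{\vec{x}}^{(j)}a_{\vec{x}}^{(k)}$ is symmetric under $j\leftrightarrow k$, whereas $\Gamma_j\Gamma_k$ is antisymmetric. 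Hence $Q = \tfrac1{2d}\big(1+\|\vec{v}_{\vec{x}}\|_2\big)\,\id$, which is manifestly Hermitian (a multiple of the identity). As a consistency check, $\tr Q = \tfrac12\big(1+\|\vec{v}_{\vec{x}}\|_2\big)$, matching the Helstrom value since $\big\|\rho_{\vec{x}} - \rho_{\vec{\underline{x}}}\big\|_1 = 2\|\vec{v}_{\vec{x}}\|_2$.

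For condition 2, I would write $Q - \tfrac12\rho_{\vec{x}} = \tfrac1{2d}\big(\|\vec{v}_{\vec{x}}\|_2\,\id - \sum_j v_{\vec{x}}^{(j)}\Gamma_j\big)$. By the squaring identity, $\sum_j v_{\vec{x}}^{(j)}\Gamma_j$ has square $\|\vec{v}_{\vec{x}}\|_2^2\,\id$, so its spectrum lies in $\{\pm\|\vec{v}_{\vec{x}}\|_2\}$, giving $\|\vec{v}_{\vec{x}}\|_2\,\id - \sum_j v_{\vec{x}}^{(j)}\Gamma_j \geq 0$, i.e.\ $Q \geq \tfrac12\rho_{\vec{x}}$; substituting $-\vec{v}_{\vec{x}}$ for $\vec{v}_{\vec{x}}$ yields $Q \geq \tfrac12\rho_{\vec{\underline{x}}}$ identically. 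Both conditions then certify optimality of $\{M_{\vec{x}},M_{\vec{\underline{x}}}\}$. I do not expect a genuine obstacle — the argument is bookkeeping with the anticommutation relations — but the one place where the hypotheses are actually used is the off-diagonal cancellation in condition 1, which is precisely what forces the choice of $\vec{a}_{\vec{x}}$ to be proportional to $\vec{v}_{\vec{x}}$ rather than any other unit vector.
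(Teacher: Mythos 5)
Your proposal is correct and follows essentially the same route as the paper: you verify the two optimality conditions by computing $Q=\tfrac{1}{2d}\bigl(1+\|\vec{v}_{\vec{x}}\|_2\bigr)\id$ exactly as in Lemma~\ref{lem:Qsum} and then bounding the spectrum of $\rho_{\vec{x}}$ and $\rho_{\vec{\underline{x}}}$. The only cosmetic difference is that you get the eigenvalue bound directly from $\bigl(\sum_j v^{(j)}_{\vec{x}}\Gamma_j\bigr)^2=\|\vec{v}_{\vec{x}}\|_2^2\,\id$, whereas the paper's Lemma~\ref{lem:eigenvalues} reaches the same value $\lambda_{\rm max}(\rho_{\vec{x}})=\tfrac{1}{d}\bigl(1+\|\vec{v}_{\vec{x}}\|_2\bigr)$ by maximizing $\tr(\sigma\rho_{\vec{x}})$ over states $\sigma$; both rest on the same anticommutation relations.
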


\subsection{With post-measurement information}

We are now ready to determine the optimal measurements for the case \emph{with} post-measurement information.
First of all, recall from Lemma~\ref{lem:lowerBound} that we can subdivide our problem into smaller parts by partitioning the set of
strings $\vec{x}$. Applied to the present case, these partitions
are simply given by
\begin{align}
\tilde{T}_{\vec{x}} = \{\vec{x}, \vec{\underline{x}}\}\ ,
\end{align}
where for simplicity we here use the vector $\vec{x}$ itself to label the partition.
Note that by Lemma~\ref{lem:lowerBound} we thus have that
\begin{align}\label{eq:rhs}
\psucPI(\ens,P) \geq \max_{\vec{x}} \psuc(\ens_{\tilde{T}_{\vec{x}}})\ .
\end{align}
We show in the appendix that this bound is in fact tight. 
\begin{lemma}\label{lem:tightBound}
For Clifford encodings
\begin{align}
\psucPI(\ens,P) = \max_{\vec{x}} \psuc(\ens_{\tilde{T}_{\vec{x}}})\ ,
\end{align}
and post-measurement information is useless if and only if the maximum on the r.h.s. is attained by $\vec{x} = (0,\ldots,0)$.
\end{lemma}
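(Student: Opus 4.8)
The plan is to squeeze $\psucPI(\ens,P)$ between the lower bound of Lemma~\ref{lem:lowerBound} and a matching upper bound obtained from the dual SDP, after first pinning down the value of each derived binary problem. Since $x$ is uniform on $\{0,1\}$ we have $p_{xb}=p_b/2$, hence $\tau_{\vec{x}}=\tfrac12\rho_{\vec{x}}$ with $\rho_{\vec{x}}=\sum_b p_b\,\rho_{x^{(b)}b}=\tfrac1d\bigl(\id+\sum_j v^{(j)}_{\vec{x}}\Gamma_j\bigr)$ and $\vec{v}_{\vec{x}}=\sum_b p_b\,\gamma_{x^{(b)}b}$, so the dual SDP reads: minimize $\tr(Q)$ subject to $Q\geq\tfrac12\rho_{\vec{x}}$ for all $\vec{x}\in\{0,1\}^{\times L}$. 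First I would record that, by the Theorem of the previous subsection, the measurement~\eqref{eq:meas} is optimal for discriminating $\rho_{\vec{x}}$ from $\rho_{\vec{\underline{x}}}$; combining it with the trace relations $\tr(\Gamma_j\Gamma_k)=d\,\delta_{jk}$ (a consequence of the anti-commutation and $(\Gamma_j)^2=\id$) and with $\vec{v}_{\vec{\underline{x}}}=-\vec{v}_{\vec{x}}$ (which follows from $\gamma_{xb}=-\gamma_{(1-x)b}$), a short computation gives $\psuc(\ens_{\tilde{T}_{\vec{x}}})=\tfrac12\bigl(1+\|\vec{v}_{\vec{x}}\|_2\bigr)$, and therefore $\max_{\vec{x}}\psuc(\ens_{\tilde{T}_{\vec{x}}})=\tfrac12(1+v_{\rm max})$ where $v_{\rm max}:=\max_{\vec{x}}\|\vec{v}_{\vec{x}}\|_2$.

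The heart of the proof is to exhibit a feasible dual point achieving exactly this value, namely the scalar operator $Q=\tfrac{1+v_{\rm max}}{2d}\,\id$, for which $\tr(Q)=\tfrac12(1+v_{\rm max})$. For feasibility I use the anti-commutation of the generators once more: $\bigl(\sum_j v^{(j)}_{\vec{x}}\Gamma_j\bigr)^2=\|\vec{v}_{\vec{x}}\|_2^2\,\id$, so $\tfrac12\rho_{\vec{x}}$ has eigenvalues $\tfrac1{2d}\bigl(1\pm\|\vec{v}_{\vec{x}}\|_2\bigr)$ and hence $\lambda_{\rm max}\bigl(\tfrac12\rho_{\vec{x}}\bigr)=\tfrac1{2d}\bigl(1+\|\vec{v}_{\vec{x}}\|_2\bigr)\leq\tfrac{1+v_{\rm max}}{2d}$, i.e.\ $Q\geq\tfrac12\rho_{\vec{x}}$ for every $\vec{x}$. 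Weak duality then gives $\psucPI(\ens,P)\leq\tfrac12(1+v_{\rm max})$, and together with the reverse inequality $\psucPI(\ens,P)\geq\max_{\vec{x}}\psuc(\ens_{\tilde{T}_{\vec{x}}})=\tfrac12(1+v_{\rm max})$ of Lemma~\ref{lem:lowerBound} and~\eqref{eq:rhs}, this yields the claimed equality. Equivalently, one can make this self-contained via the optimality conditions of Lemma~\ref{lem:conditions}: let $\vec{w}$ attain $v_{\rm max}=\|\vec{v}_{\vec{w}}\|_2$, set $M_{\vec{w}}$ and $M_{\vec{\underline{w}}}$ as in~\eqref{eq:meas} (with $\vec{w}$ in place of $\vec{x}$) and all remaining POVM elements to zero; using $\rho_{\vec{w}}M_{\vec{w}}=\tfrac{1+v_{\rm max}}{d}M_{\vec{w}}$ the operator $Q=\sum_{\vec{x}}\tau_{\vec{x}}M_{\vec{x}}$ collapses to precisely $\tfrac{1+v_{\rm max}}{2d}\,\id$, which is Hermitian and dominates every $\tau_{\vec{x}}$ by the eigenvalue bound above.

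For the second statement I would identify the no-information success probability with a particular derived problem. Discriminating the string $x$ with no knowledge of $b$ amounts to discriminating $\rho_{(0,\ldots,0)}=\sum_b p_b\,\rho_{0b}$ from $\rho_{(1,\ldots,1)}=\sum_b p_b\,\rho_{1b}$ with equal prior, which is exactly the derived problem $\ens_{\tilde{T}_{(0,\ldots,0)}}$; hence $\psuc(\ens,P)=\psuc(\ens_{\tilde{T}_{(0,\ldots,0)}})=\tfrac12\bigl(1+\|\vec{v}_{(0,\ldots,0)}\|_2\bigr)$. Combining this with the equality just proved, $\Delta(\ens,P)=\psucPI(\ens,P)-\psuc(\ens,P)=\tfrac12\bigl(v_{\rm max}-\|\vec{v}_{(0,\ldots,0)}\|_2\bigr)$, which vanishes precisely when $\|\vec{v}_{(0,\ldots,0)}\|_2=v_{\rm max}$, i.e.\ precisely when the maximum on the right-hand side is attained at $\vec{x}=(0,\ldots,0)$.

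The only genuine obstacle is the middle step: recognising that the optimal dual variable is a scalar multiple of the identity, and then verifying the eigenvalue identity $\lambda_{\rm max}\bigl(\sum_j v^{(j)}_{\vec{x}}\Gamma_j\bigr)=\|\vec{v}_{\vec{x}}\|_2$ (which is where the Clifford structure enters) together with the fact that the resulting trace matches the lower bound \emph{exactly}. Once this is in hand, the remainder is bookkeeping, relying throughout on the complementary-vector symmetry $\vec{v}_{\vec{\underline{x}}}=-\vec{v}_{\vec{x}}$.
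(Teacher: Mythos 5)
Your proposal is correct and follows essentially the same route as the paper: the key object in both is the scalar dual point $Q=\tfrac{1}{2d}\bigl(1+\max_{\vec{x}}\|\vec{v}_{\vec{x}}\|_2\bigr)\id$, whose feasibility follows from the eigenvalue computation $\lambda_{\rm max}(\rho_{\vec{x}})=\tfrac{1}{d}(1+\|\vec{v}_{\vec{x}}\|_2)$ and whose trace matches the lower bound of Lemma~\ref{lem:lowerBound} (the paper phrases this via the optimality conditions of Lemma~\ref{lem:conditions} together with Lemmas~\ref{lem:Qsum} and~\ref{lem:eigenvalues}, which is exactly your alternative formulation). Your treatment of the ``useless iff the maximum is attained at $(0,\ldots,0)$'' claim, via the identity $\psuc(\ens,P)=\psuc(\ens_{\tilde{T}_{(0,\ldots,0)}})$, is in fact more explicit than the paper's one-line remark.
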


Note that the optimal measurement is thus given by~\eqref{eq:meas} for the vector $\vec{x}$ maximizing the r.h.s of~\eqref{eq:rhs},
and letting all other $M_{\vec{\tilde{x}}} = 0$. This shows that for our class of problems the problem of finding the optimal measurement
can be simplified considerably and is easily evaluated. 

It is a very useful consequence of our analysis that for any cryptographic application that makes use of such encodings, we can always perform a
relabeling of states $\rho_{xb}$ such that post-measurement information becomes useless. More precisely, we will associate $\vec{x}$ with the new
all $(0,\ldots,0)$ vector and $\vec{\underline{x}}$ with the new $(1,\ldots,1)$ vector. That is, for the optimal vector $\vec{x}$ we let
\begin{align}
\rho^{\rm new}_{0b} &:= \rho_{x^{(b)}b}\ ,\\
\rho^{\rm new}_{1b} &:= \rho_{(1 - x^{(b)})b}\ .
\end{align}
Clearly, by Lemma~\ref{lem:tightBound} we then have for $\ens^{\rm new} = \{\rho_{xb}^{\rm new}\}_{xb}$ 
that 
\begin{align}
\Delta(\ens^{\rm new},P) = 0\ ,
\end{align}
as desired.

\subsection{Example}

We now consider a small example that illustrates how our statement applies to the case where we have only two possible encodings $\cB = \{0,1\}$ into two orthogonal pure states in 
dimension $d=2$, and we choose the encoding uniformly at random ($p_b = 1/2$). 
A simple example is encoding into the BB84 bases~\cite{bb84}, where we pick the computational basis for $b=0$ and the Hadamard basis for $b=1$.
We now show that in two dimensions, post-measurement information is useless if and only if the angle between the Bloch vectors for the states 
$\rho_{00}$ and $\rho_{01}$ 
obeys $\theta \leq \frac{\pi}{2}$ as illustrated in Figures~\ref{fig:piUseless} and~\ref{fig:piUseful}.

\begin{figure}
\includegraphics{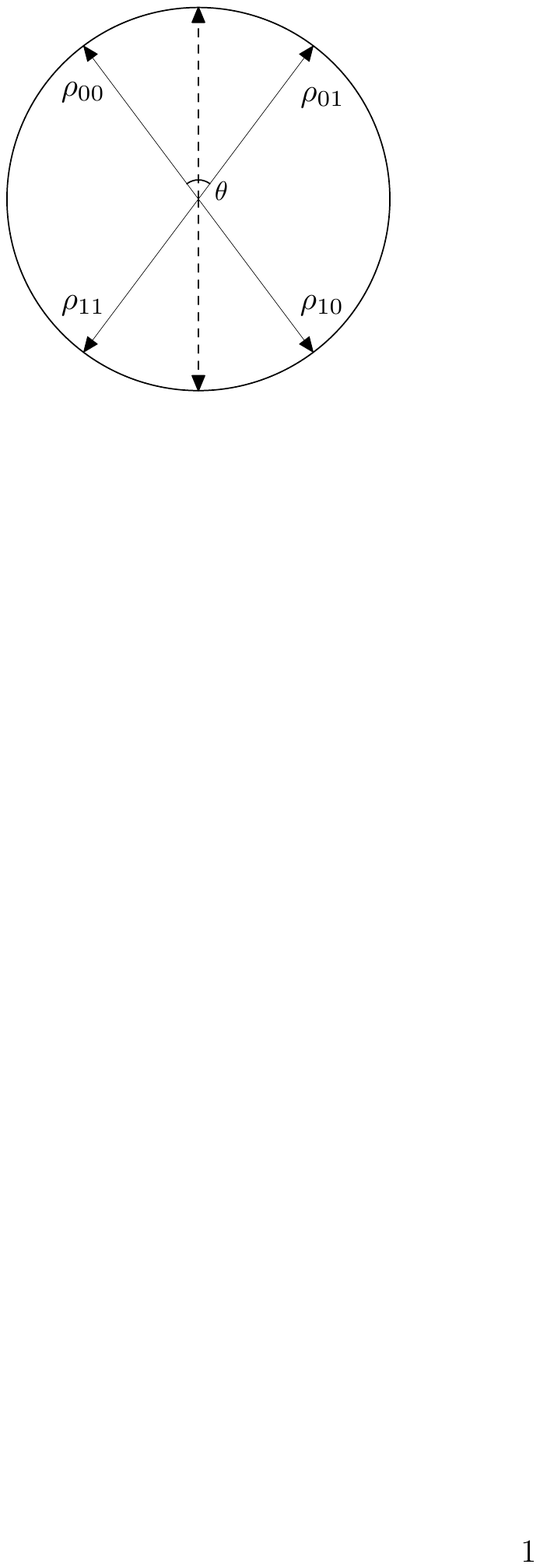}
\caption{Post-measurement information is useless iff $\theta \leq \frac{\pi}{2}$. The dashed line corresponds to the Bloch
vector of the optimal measurement using post-measurement information consisting of two rank one projectors $M_{00}$ and $M_{11}$, which is the same measurement one would make
for standard state discrimination. We output the same bit, no matter what encoding information $b$ we receive.}
\label{fig:piUseless}
\end{figure}

\begin{figure}
\includegraphics{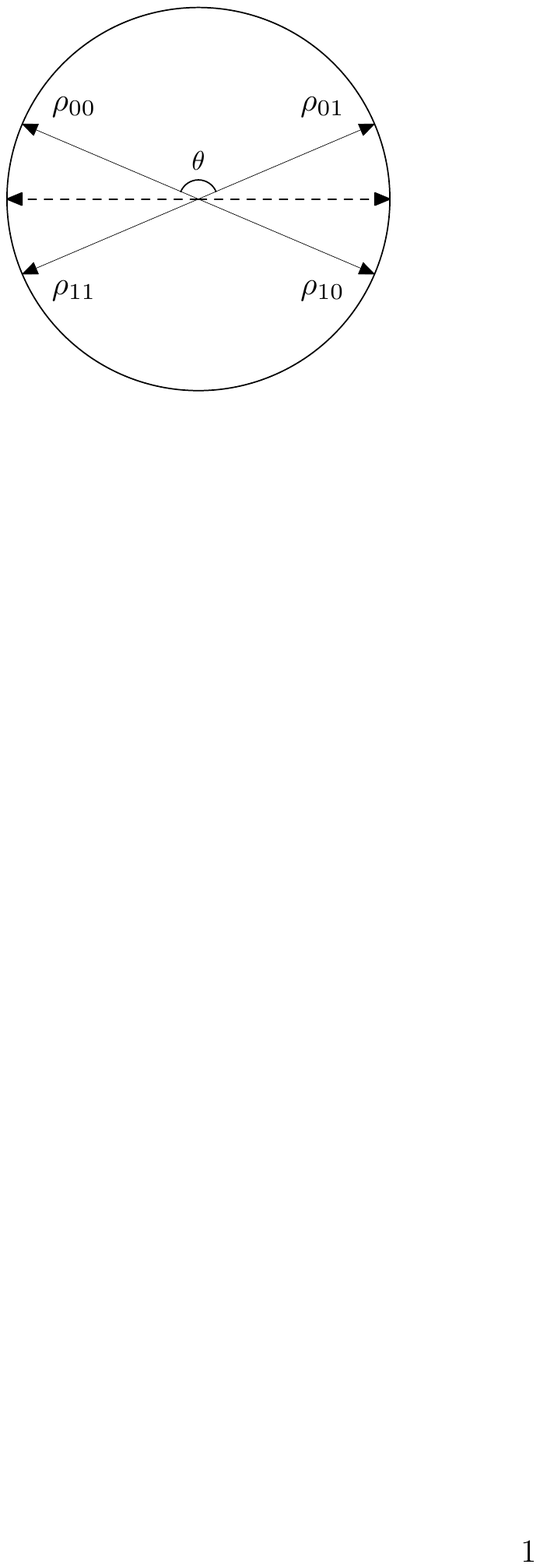}
\caption{Post-measurement information is useful for $\theta > \frac{\pi}{2}$. The dashed line corresponds to the Bloch vector of the optimal measurement using post-measurement information
consisting of two rank one projectors $M_{01}$ and $M_{10}$, which is the measurement one would make in standard state discrimination, if we were to distinguish $(\rho_{00} + \rho_{11})/2$ from
$(\rho_{01} + \rho_{10})/2$. Which bit we output depends on the post-measurement information we receive.}

\label{fig:piUseful}
\end{figure}
Note that in this example the average states are given by
\begin{align}
\rho_{(0,0)} &= \frac{1}{2}\left(\rho_{00} + \rho_{01}\right)\ ,\label{eq:rho00}\\
\rho_{(1,1)} &= \frac{1}{2}\left(\rho_{10} + \rho_{11}\right)\ ,\label{eq:rho11}\\
\rho_{(0,1)} &= \frac{1}{2}\left(\rho_{00} + \rho_{11}\right)\ ,\label{eq:rho01}\\
\rho_{(1,0)} &= \frac{1}{2}\left(\rho_{10} + \rho_{01}\right)\ .\label{eq:rho10}
\end{align}
The two partitions we are considering are $\tilde{T}_{(0,0)} = \{(0,0), (1,1)\}$ and $\tilde{T}_{(0,1)} = \{(0,1),(1,0)\}$.
Let $\vec{v}_0$ and $\vec{v}_1$ be the Bloch vectors corresponding to the states $\rho_{00}$ and $\rho_{01}$ respectively.
We have from
Lemma~\ref{lem:eigenvalues} that 
\begin{align}
\lambda_{\rm max}(\rho_{\vec{x}}) &= 
\lambda_{\rm max}(\rho_{\vec{\underline{x}}})  \\
&=\left\{\begin{array}{cc}
\frac{1}{2}\left(\id + \|v_0 + v_1\|_2\right) & \mbox{ for } \vec{x} = (0,0)\ ,\\[2mm]
\frac{1}{2}\left(\id + \|v_0 - v_1\|_2\right) & \mbox{ for } \vec{x} = (0,1)\ .
\end{array}
\right.
\end{align}
Hence, by Lemma~\ref{lem:tightBound} post-measurement information is useless if and only if
\begin{align}\label{eq:blochCondition}
\|v_0 + v_1\|_2 \geq \|v_0 - v_1\|_2\ .
\end{align}
Since $\|v_0\|_2 = \|v_1\|_2 = 1$ for pure states, we have $\|v_0 + v_1\|_2 = 2 + 2 \cos\theta$ and $\|v_0 - v_1\|_2 = 2 - 2 \cos \theta$
and thus~\eqref{eq:blochCondition} holds if and only if $\theta \leq \frac{\pi}{2}$. 
The optimal measurement is again given by~\eqref{eq:meas}.
Note that this is rather intuitive, since for partition $\tilde{T}_{(0,0)}$ we always give the same answer, no matter what post-measurement information we receive.

\section{Classical ensembles}\label{sec:classicalEnsembles}

We saw above that for the case of Clifford encodings even if post-measurement information was useful for the original problem, that is, $\psuc(\ens,P) < \psucPI(\ens,P)$, 
we could always perform a relabeling to obtain a new problem for which post-measurement
information is useless. We now show that this is a unique quantum feature, and is not present in analogous classical problems as
long as we are able to gain some information even without post-measurement information, i.e., $\psuc(\ens,P) > 1/|\cX|$. 
We thereby call a problem \emph{classical} if and only if all states $\rho_{xb}$ commute. 

We again focus on the case where we wish to encode 
a single bit $x \in \01$.
Let $\Pi_{xb}$ be a projector onto the support of $\rho_{xb}$. For simplicity, we will assume in the following
that $\Pi_{0b} + \Pi_{1b} = \id$ for all encodings $b$, and that the projectors are of equal rank $r = \rank(\Pi_{0b}) = \rank(\Pi_{1b})$.
We also assume that $\rho_{xb} = \Pi_{xb}/r$. It is straightforward to extend our argument to a more general case, but makes it more difficult
to follow our idea.

In~\cite[Lemma 5.1]{ww:pistar} it was shown that if $[P_{xb},P_{x'b'}] = 0$ for all bits $x,x'$ and encodings $b,b'$ of this form 
\begin{align}
\psucPI(\ens,P) = 1\ .
\end{align}
Recall that we are interested in the case where $\psuc(\ens,P) < \psucPI(\ens,P)$. Hence, our goal will be to show that
there exists no relabelling as in the previous section that allows us to create a new problem $\ens^{\rm new}$ for which
$\psuc(\ens^{\rm new},P) = \psucPI(\ens,P) = 1$.

\subsection{Non-local games}

To show our result, we will need the notion of non-local games which are a different way of looking at Bell inequalities~\cite{bell}.
For example, the well-known CHSH inequality~\cite{chsh} takes the following form when converted to a game. 
Imagine two space-like separated parties, Alice and Bob.
We choose two questions $s, t \in \01$ uniformly at random and send them to Alice and Bob respectively. The rules are that they win the game if and only
if they manage to return answers $a,b \in \01$ such that $s \cdot t = a + b \mod 2$. Without loss of generality, we may thereby assume that
Alice and Bob perform a measurement depending on the question they receive, and simply return the outcome of that measurement.
To help them win the game, Alice and Bob may thereby agree on any shared state and measurements ahead of time, but are no longer able to communicate
once the game starts. The average probability that they win the game is thus
\begin{align}
p_{\rm win} = \max \frac{1}{4}\sum_{s,t} \sum_{\substack{a,b\\a + b \mod 2 = s \cdot t}} \Pr[a,b|s,t]\ , 
\end{align}
where $\Pr[a,b|s,t]$ is the probability that they return answers $a$ and $b$ given questions $s$ and $t$, and the maximization
is over all states and measurements allowed in a particular theory. Classically, we have
\begin{align}\label{eq:CHSHclassical}
p_{\rm win}^{\rm classical} = \frac{3}{4}\ .
\end{align}
In a quantum world, however, Alice and Bob can achieve
\begin{align}
p_{\rm win}^{\rm quantum} = \frac{1}{2} + \frac{1}{2 \sqrt{2}} \approx 0.853\ .
\end{align}
More general non-local games are of course possible, where we may have a larger number of questions and answers, and the rules
of the game may be more complicated. 

Of central importance to us will be the fact that if Alice's (or Bob's) measurements commute, then there exists a classical strategy that
achieves the same winning probability (see e.g.~\cite{steph:diss}). We now use this fact to prove our result.

\subsection{A classical-quantum gap}

To explain the main idea behind our construction, we focus on the case where we only have two possible encoding $L=2$.
That is, $\cB = \01$ and $\cX = \01$. We also assume that the bit $x$, as well as the encoding $b$ is chosen uniformly and independently 
at random. The states defining our problem are thus $\rho_{00}$, $\rho_{01}$, $\rho_{10}$ and $\rho_{11}$.
We again consider the two partitions labeled by $\vec{x} \in \01^2$ given by
\begin{align}
\tilde{T}_{(0,0)} &= \{(0,0),(1,1)\}.\\
\tilde{T}_{(0,1)} &= \{(0,1),(1,0)\}.
\end{align}
As before, we can associate a standard state discrimination problem with each of these partitions. For the first partition $\tilde{T}_{(0,0)}$
as wish to discriminate between the states $\rho_{(0,0)}$ and $\rho_{(1,1)}$ specified by~\eqref{eq:rho00} and~\eqref{eq:rho11} where we are given one of the
two states with equal probability. Let $p_1$ denote the success probability
of solving this problem, maximized over all possible measurements. 
Note that our condition of being able to gain some information in the state discrimination problem corresponds to having
\begin{align}\label{eq:lowerBoundP1}
\frac{1}{2}< p_1\ .
\end{align}
For the second partition $\tilde{T}_{(0,1)}$, we wish to discriminate between $\rho_{(0,1)}$ and $\rho_{(1,0)}$ from~\eqref{eq:rho01} and~\eqref{eq:rho10}, again
given with equal probability. Let $p_2$ denote the corresponding success probability for the second partition.
Note that since we have only two possible partitions here constructed in the way outlined in Section~\ref{sec:tightBounds}, our goal of showing
that there exists no relabeling that makes post-measurement information useless can be rephrased as showing that $p_2 < 1$.

We now show that these two state discrimination problems arise naturally in the CHSH game.
In particular, we show in the appendix that
\begin{lemma}
There exists
a strategy for Alice and Bob to succeed at the CHSH game with probability $(p_1 + p_2)/2$, where Alice's
measurements are given by the projectors $\{P_{00},P_{10}\}$ and $\{P_{01},P_{11}\}$.
\end{lemma}

However, recall that if the ensemble of states is classical the projectors $P_{xb}$ all commute, and hence there exists a classical
strategy for Alice and Bob that also achieves a winning probability of $(p_1 + p_2)/2$. Hence, by~\eqref{eq:CHSHclassical} we must
have
\begin{align}
\frac{p_1 + p_2}{2} &\leq \frac{3}{4}\ .
\end{align}
Using~\eqref{eq:lowerBoundP1} this implies $p_2 \leq 3/2 - p_1 < 1 = \psucPI$,
and hence the relabelling corresponding to the second partition
cannot make post-measurement information useless. To summarize we obtain that~\footnote{Any relabeling that relabels at least one $\rho_{xb}$ is called non-trivial.}

\begin{theorem}
For the case of two encodings of a single bit chosen uniformly at random (i.e., $p_{xb} = 1/4$), which do allow us to gain
some information even without post-measurement information ($\psuc > 1/2$), 
there exists no non-trivial relabeling that renders post-measurement information useless.
\end{theorem}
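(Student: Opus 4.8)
The plan is to reduce the theorem to the single inequality $p_2 < 1$, where $p_2$ is the optimal success probability for discriminating $\rho_{(0,1)}$ from $\rho_{(1,0)}$ with equal priors. Since we only have $L=2$ encodings of a single bit, the construction of Section~\ref{sec:tightBounds} yields exactly two partitions $\tilde T_{(0,0)}$ and $\tilde T_{(0,1)}$, and a non-trivial relabeling that makes post-measurement information useless would have to be the one associated with $\tilde T_{(0,1)}$ (the trivial partition $\tilde T_{(0,0)}$ corresponds to the original, no-relabeling problem). So the entire content is: show $p_2 \neq 1$, given the hypothesis $p_1 > 1/2$.

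The route to $p_2 < 1$ goes through the CHSH game. First I would invoke the Lemma stated just above the theorem, which exhibits a quantum strategy for CHSH with winning probability $(p_1+p_2)/2$ in which Alice's two measurements are precisely $\{P_{00},P_{10}\}$ and $\{P_{01},P_{11}\}$ — the projectors defining the classical ensemble. Next I would use the classicality hypothesis: since all the $P_{xb}$ commute, Alice's measurements commute, and by the fact recalled at the end of the non-local games subsection (a commuting strategy for one party can be simulated classically) there is a \emph{classical} strategy for CHSH winning with probability $(p_1+p_2)/2$. The classical CHSH bound~\eqref{eq:CHSHclassical} then forces $(p_1+p_2)/2 \le 3/4$, i.e. $p_2 \le 3/2 - p_1$. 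Feeding in the hypothesis $p_1 > 1/2$ from~\eqref{eq:lowerBoundP1} gives $p_2 < 1$. Since $\psucPI = 1$ for classical ensembles by~\cite[Lemma 5.1]{ww:pistar}, the relabeling corresponding to $\tilde T_{(0,1)}$ does not achieve $\psuc(\ens^{\rm new},P) = \psucPI(\ens,P)$, and as it is the only non-trivial candidate, no non-trivial relabeling renders post-measurement information useless.

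The main obstacle — and really the only non-bookkeeping step — is establishing the Lemma that the two state-discrimination success probabilities $p_1, p_2$ reassemble into a valid CHSH strategy with winning probability exactly their average, with Alice's measurements being the ensemble projectors. One has to set up the dictionary carefully: Alice's question $s\in\{0,1\}$ selects the encoding $b=s$ and she measures $\{P_{0s},P_{1s}\}$ returning the outcome bit; Bob's question $t\in\{0,1\}$ selects which of the two partitions $\tilde T_{(0,0)}$ (say $t=0$) or $\tilde T_{(0,1)}$ ($t=1$) he is ``playing,'' and he applies the corresponding optimal discrimination measurement, outputting a bit computed from his guess so that the CHSH winning condition $a+b = s\cdot t \bmod 2$ coincides with ``Bob correctly identified which of the two states in $\tilde T_{\vec x(t)}$ was sent.'' One then checks that averaging over $s,t$ uniform, the $t=0$ branch contributes $p_1$ and the $t=1$ branch contributes $p_2$, giving $p_{\rm win} = (p_1+p_2)/2$; the state Alice and Bob share is the purification/source that couples Alice's projective measurement outcome to the label $x$ Bob must guess. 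Everything after that Lemma is the short chain of inequalities above.

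\begin{proof}[Proof sketch]
Since $L = 2$ and $x$ is a single bit, the partition construction of Section~\ref{sec:tightBounds} produces exactly the two partitions $\tilde T_{(0,0)}$ and $\tilde T_{(0,1)}$, so the only non-trivial relabeling candidate is the one associated with $\tilde T_{(0,1)}$, whose induced no-information success probability is $p_2$. By the Lemma above there is a quantum CHSH strategy winning with probability $(p_1+p_2)/2$ in which Alice measures $\{P_{00},P_{10}\}$ or $\{P_{01},P_{11}\}$; as the ensemble is classical these projectors commute, so Alice's measurements commute and the strategy can be simulated classically, whence by~\eqref{eq:CHSHclassical}
\begin{align}
\frac{p_1+p_2}{2} \leq \frac{3}{4}\ .
\end{align}
With~\eqref{eq:lowerBoundP1} this gives $p_2 \leq \tfrac{3}{2} - p_1 < 1 = \psucPI(\ens,P)$, where the last equality is~\cite[Lemma 5.1]{ww:pistar}. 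Hence the relabeling from $\tilde T_{(0,1)}$ cannot make post-measurement information useless, and being the only non-trivial candidate, none can.
\end{proof}
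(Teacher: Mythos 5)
Your proposal is correct and follows exactly the paper's own argument: reduce the claim to $p_2<1$, invoke the CHSH-strategy lemma to get a quantum strategy winning with probability $(p_1+p_2)/2$, use commutativity of the $P_{xb}$ to classicalize it, apply the classical bound $3/4$, and conclude $p_2\leq 3/2-p_1<1=\psucPI$. There is no substantive difference from the paper's proof.
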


Note that if we are able to gain some information in both state discrimination problems, i.e., $p_1, p_2 > 1/2$ the preceding discussion also implies that
$p_1, p_2 < 1$, that is, post-measurement information is never useless. Bounds on Bell inequalities corresponding to bounds on the maximum winning probability that
can be achieved in a classical world can thus allow us to place bounds on how well we can solve state discrimination problems \emph{without} post-measurement information.

This is in stark contrast to the quantum setting. For example, for the BB84 encodings it is not hard to see that $p_1 = p_2 = \psucPI \approx 0.853$~\cite{ww:pistar}, 
and hence post-measurement information is always useless. Yet, there exist classical encodings~\cite{ww:pistar} for which $p_1 = p_2 = 3/4$ but $\psucPI  = 1$.

To analyze the case of multiple encodings, we have to consider more complicated games than the one obtained from the CHSH inequality.
A natural choice is to consider games in which Bob has to solve different state discrimination problems corresponding to different partitions
of the vectors $\vec{x}$ depending on his question $t$ in the game. To make a fully general statement we would like to include all possible partitions. Clearly,
however the above approach can also be used to place bounds on the average of success probabilities for a subset of partitions by defining a game with less questions, 
and evaluating it's maximum classical winning probability.

\section{Conclusions}

Our work raises several immediate open questions. First of all, can we obtain sharper bounds? Since solving an SDP numerically is still very expensive in higher dimensions, it would
also be interesting to prove bounds on how well generic measurements such as the square-root measurement (also known as the pretty good measurement~\cite{wootters:pgm}) perform.
The pretty good measurement is a special case of Belavkin's weighted measurements~\cite{belavkin:optimal,belavkin:radio,mochon:pgm}, which was already used in its cube weighted form in~\cite{ww:pistar} 
to provide bounds on the state discrimination with post-measurement information. Such bounds have most recently been shown by Tyson~\cite{tyson:pgm} for standard state discrimination.
Yet, no good bounds are known on how well such measurements perform for our task.
More generally, it would be very interesting to see whether one can adapt the iterative procedures investigated in~\cite{werner:iterate,jezek:iterate,jezek:iterate2,tyson:oldIterate} 
to find optimal measurements for the case of standard state discrimination without post-measurement information 
to this setting. Concerning such iterative procedures, we would like to draw special attention to the recent work by Tyson~\cite{tyson:newIterate} generalizing monotonicity 
results for such iterates~\cite{reimpell:diss}, which could be applied here.

Naturally, it would be very interesting to know if our results for Clifford encodings can be extended to a more general setting.
Our discussion of classical ensembles shows that there exist problems for which $\psuc < \psucPI$ no matter what relabeling we perform~\cite{ww:pistar},
and hence we cannot hope that a similar statement holds in general. Nevertheless, it would be interesting to obtain necessary and sufficient conditions
for when post-measurement is already useless, or otherwise can be made useless by performing a relabeling.

\acknowledgments

DG thanks John Preskill and Caltech for a Summer Undergraduate Research Fellowship.
SW thanks Robin Blume-Kohout and Sarah Croke for interesting discussions.
SW is supported by NSF grants PHY-04056720 and PHY-0803371.

\appendix

\bigskip
In this appendix, we provide the technical details of our claims. For ease of reading, we thereby provide the proofs together with the statement of the lemmas.

\section{Proofs of Section~\ref{sec:bounds}}
\subsection{Optimality conditions}
\begin{lemma}
A POVM with operators $\{M_{\vec{x}}\}_{\vec{x}}$ is optimal for state discrimination with post-measurement information for the ensemble
$\cE = \{p_{xb}, \rho_{xb}\}$ if and only if the following two conditions hold:
\begin{enumerate}
\item $Q := \sum_{\vec{x}} \tau_{\vec{x}} M_{\vec{x}}$ is Hermitian.
\item $Q \geq \tau_{\vec{x}}$ for all $\vec{x} \in \cX^{\times L}$.
\end{enumerate}
\end{lemma}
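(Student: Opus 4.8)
The plan is to read off both directions from strong duality for the primal/dual SDP pair displayed above. First I would check Slater's condition on the primal: the point $M_{\vec{x}} = \id/|\cX^{\times L}|$ for all $\vec{x}$ is strictly feasible, so strong duality holds, the optimal primal and dual values coincide, and the primal optimum is attained (the feasible set of POVMs is compact). It is also worth noting the dual is strictly feasible, e.g.\ $Q = c\,\id$ for large $c$, so a dual optimum is attained as well.

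For the ``if'' direction, suppose the two conditions hold. Then $Q = \sum_{\vec{x}} \tau_{\vec{x}} M_{\vec{x}}$ is Hermitian and satisfies $Q \geq \tau_{\vec{x}}$ for all $\vec{x}$, hence is dual-feasible, with dual objective $\tr(Q) = \sum_{\vec{x}} \tr(\tau_{\vec{x}} M_{\vec{x}})$, which is exactly the primal objective attained by the POVM $\{M_{\vec{x}}\}$. Since $\{M_{\vec{x}}\}$ is primal-feasible, weak duality sandwiches both values together, so $\{M_{\vec{x}}\}$ is primal-optimal.

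For the ``only if'' direction, let $\{M_{\vec{x}}\}$ be an optimal POVM and let $Q^{*}$ be an optimal dual solution, so $\tr(Q^{*}) = \sum_{\vec{x}} \tr(\tau_{\vec{x}} M_{\vec{x}})$. Using $\tr(Q^{*}) = \tr\!\big(Q^{*} \sum_{\vec{x}} M_{\vec{x}}\big) = \sum_{\vec{x}} \tr(Q^{*} M_{\vec{x}})$ and subtracting gives $\sum_{\vec{x}} \tr\!\big((Q^{*} - \tau_{\vec{x}}) M_{\vec{x}}\big) = 0$. Each summand is nonnegative since $Q^{*} - \tau_{\vec{x}} \geq 0$ and $M_{\vec{x}} \geq 0$, so each vanishes. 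Now the one nonroutine step: for positive semidefinite $A, B$, the relation $\tr(AB)=0$ forces $AB=0$ (diagonalize $A$ and use that a positive semidefinite matrix with zero diagonal is zero, equivalently $\sqrt{A}\,B\,\sqrt{A}=0$ hence $\sqrt{B}\sqrt{A}=0$). Applying this with $A = Q^{*} - \tau_{\vec{x}}$, $B = M_{\vec{x}}$ yields $Q^{*} M_{\vec{x}} = \tau_{\vec{x}} M_{\vec{x}}$ for every $\vec{x}$; summing over $\vec{x}$ and using $\sum_{\vec{x}} M_{\vec{x}} = \id$ gives $Q^{*} = \sum_{\vec{x}} \tau_{\vec{x}} M_{\vec{x}} = Q$. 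Hence $Q = Q^{*}$ is Hermitian and satisfies $Q \geq \tau_{\vec{x}}$ for all $\vec{x}$, which are the claimed conditions. I expect the complementary-slackness bookkeeping and this $\tr(AB)=0 \Rightarrow AB=0$ lemma to be the only points requiring care; the rest is a direct application of conic duality.
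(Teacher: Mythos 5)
Your proposal is correct and follows essentially the same route as the paper's own proof: weak duality for the ``if'' direction, and Slater's condition (via the maximally mixed POVM) plus complementary slackness and the fact that $\tr(AB)=0$ forces $AB=0$ for positive semidefinite $A,B$ in the ``only if'' direction. Your explicit remarks on attainment of the primal and dual optima are a minor tidying-up of details the paper leaves implicit, but the argument is the same.
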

\begin{proof}
Suppose first that the two conditions hold. Note that condition (2) tells us that $Q$ is a feasible solution, that is, it satisfies all constraints
for the dual SDP. By weak duality of SDPs we thus have $v_{\rm primal} \leq v_{\rm dual} \leq \tr(Q)$, and from condition (1)
we also have that $\tr(Q) = \sum_{\vec{x}} \tr(M_{\vec{x}} \tau_{\vec{x}}) \leq v_{\rm primal}$. Hence the POVM forms an optimal solution for the SDP.

Conversely, suppose that $\{M_{\vec{x}}\}_{\vec{x}}$ is an optimal solution for the primal SDP.
Let $Q$ be the optimal solution for the dual SDP.
Note that this means that $Q$ already satisfies condition (2), and all that remains is to show that $Q$ has the desired form given by condition (1).
Since $M_{\vec{x}} = \id/|\cX^{\times L}|$ is a feasible solution for the primal SDP, we have by Slater's condition~\cite{boyd:book} that the optimal
values $v^*_{\rm primal}$ and $v^*_{\rm dual}$ are equal, i.e., $v^*_{\rm primal} = v^*_{\rm dual}$. 
Using the fact that $\sum_{\vec{x}} M_{\vec{x}} = \id$ and that the trace is cyclic we thus have
\begin{align}
\tr(Q) - \sum_{\vec{x}} \tr(M_{\vec{x}} \tau_{\vec{x}}) &= \sum_{\vec{x}} 
\tr((Q - \tau_{\vec{x}})M_{\vec{x}}) = 0\ ,
\end{align}
Since $Q \geq \tau_{\vec{x}}$ (equivalently $Q - \tau_{\vec{x}} \geq 0$.), 
and $M_{\vec{x}} \geq 0$ for all $\vec{x}$ we have that all the terms $\tr((Q-\tau_{\vec{x}})M_{\vec{x}})$ in the sum are positive and hence we must have
for all $\vec{x}$ that $\tr((Q - \tau_{\vec{x}})M_{\vec{x}}) = 0$. Again using the fact that the two operators are positive semidefinite, and the cyclicity of the trace
we thus have for the optimal solution that
\begin{align}
(Q - \tau_{\vec{x}})M_{\vec{x}} = 
M_{\vec{x}}(Q - \tau_{\vec{x}}) = 0
\end{align}
Summing the l.h.s. over all $\vec{x}$ and noting that $\sum_{\vec{x}} M_{\vec{x}} = \id$ then gives us condition (1).
\end{proof}

\subsection{Upper bound}
\begin{lemma}
Let $N = |\cX|$ be the number of possible strings, and suppose that the joint distribution over strings and encodings satisfies
$p_{xb} = p_b/N$, where the distribution $\{p_b\}_b$ is arbitrary. Then
\begin{align}
\psucPI(\cE,P) \leq \frac{1}{N} \tr\left[\left(\sum_{\vec{x}} {\rho}_{\vec{x}}^\alpha\right)^{1/\alpha}\right]\ ,
\end{align}
for all $\alpha > 1$, where $\cE = \{\rho_{xb}\}_{xb}$, $P = \{p_{xb}\}_{xb}$ and ${\rho}_{\vec{x}} = \sum_{b=1}^L p_b \rho_{x_b b}$.
\end{lemma}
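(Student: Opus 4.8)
The plan is to produce an explicit feasible point for the dual SDP and conclude by weak duality. Under the hypothesis $p_{xb} = p_b/N$, the operators in~\eqref{eq:avgState} simplify: for $\vec{x}=(x^{(1)},\ldots,x^{(L)})$ we get $\tau_{\vec{x}} = \sum_{b=1}^{L} (p_b/N)\,\rho_{x^{(b)}b} = \rho_{\vec{x}}/N$ with $\rho_{\vec{x}} = \sum_{b=1}^L p_b\,\rho_{x^{(b)}b}$ as in the statement. Hence the dual program reads: minimize $\tr(Q)$ over Hermitian $Q$ subject to $Q \geq \rho_{\vec{x}}/N$ for every $\vec{x}\in\cX^{\times L}$. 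By weak duality of SDPs, $\psucPI(\cE,P) = v_{\rm primal}\le v_{\rm dual} \le \tr(Q)$ for \emph{any} dual-feasible $Q$, so it suffices to exhibit a good one.

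Following the Ogawa--Nagaoka trick, I would take
\[
Q := \frac{1}{N}\left(\sum_{\vec{x}} \rho_{\vec{x}}^{\alpha}\right)^{1/\alpha},
\]
where fractional powers are defined via the spectral calculus (with $0^\alpha = 0$). To verify feasibility, fix an arbitrary $\vec{x}_0$. Each $\rho_{\vec{x}}^{\alpha}$ is positive semidefinite, so $\rho_{\vec{x}_0}^{\alpha} \le \sum_{\vec{x}} \rho_{\vec{x}}^{\alpha}$. Since the map $t\mapsto t^{p}$ is operator monotone on $[0,\infty)$ for every $p\in[0,1]$ (L\"owner--Heinz), and here $p = 1/\alpha \in (0,1)$ because $\alpha>1$, applying it to both sides gives $\rho_{\vec{x}_0} = \bigl(\rho_{\vec{x}_0}^{\alpha}\bigr)^{1/\alpha} \le \bigl(\sum_{\vec{x}}\rho_{\vec{x}}^{\alpha}\bigr)^{1/\alpha}$, i.e.\ $\rho_{\vec{x}_0}/N \le Q$. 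Thus $Q$ is dual feasible, and weak duality yields $\psucPI(\cE,P) \le \tr(Q) = \frac{1}{N}\tr\bigl[(\sum_{\vec{x}}\rho_{\vec{x}}^{\alpha})^{1/\alpha}\bigr]$ for every $\alpha>1$, as claimed.

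The only step requiring care is the operator inequality: it is not enough that $t\mapsto t^{1/\alpha}$ is monotone as a scalar function, since the $\rho_{\vec{x}}$ need not commute — one genuinely needs \emph{operator} monotonicity, which is exactly what forces the restriction $1/\alpha \le 1$, i.e.\ $\alpha>1$. Non-invertibility of the $\rho_{\vec{x}}$ causes no difficulty, as all powers are taken in the spectral sense. Beyond this, the argument is short; I expect no real obstacle, the main content being the observation that the correct dual candidate is the $\alpha$-averaged operator above, with the interesting (non-trivial) regime of the bound obtained by taking $\alpha$ large.
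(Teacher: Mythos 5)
Your proof is correct and follows essentially the same route as the paper's: the key step in both is applying the operator monotonicity of $t\mapsto t^{1/\alpha}$ (for $\alpha>1$) to $\rho_{\vec{x}}^{\alpha}\leq\sum_{\vec{x}}\rho_{\vec{x}}^{\alpha}$, and your weak-duality packaging of the final step is just the paper's direct computation $\sum_{\vec{x}}\tr(M_{\vec{x}}\rho_{\vec{x}})\leq\tr[(\sum_{\vec{x}}\rho_{\vec{x}}^{\alpha})^{1/\alpha}]$ using $\sum_{\vec{x}}M_{\vec{x}}=\id$ in different words.
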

\begin{proof}
Note that since $y^{1/\alpha}$ is operator monotone for $\alpha > 1$~\cite[Theorem V.1.9]{bathia:book} we have
\begin{align}
\rho_{\vec{x}} = \left(\rho_{\vec{x}}^\alpha\right)^{\frac{1}{\alpha}} \leq \left(\sum_{\vec{x}} \rho_{\vec{x}}^{\alpha}\right)^{\frac{1}{\alpha}}\ .
\end{align}
Using the fact that $\sum_{\vec{x}} M_{\vec{x}} = \id$ we hence obtain
\begin{align}
\psucPI(\cE,P) &= \frac{1}{N} \sum_{\vec{x}} \tr\left(M_{\vec{x}} \rho_{\vec{x}}\right)\\
&\leq \frac{1}{N} \sum_{\vec{x}} \tr\left[M_{\vec{x}} 
 \left(\sum_{\vec{x}} \rho_{\vec{x}}^{\alpha}\right)^{\frac{1}{\alpha}}\right]\\
& = \frac{1}{N}  \tr\left[\left(\sum_{\vec{x}} \rho_{\vec{x}}^{\alpha}\right)^{\frac{1}{\alpha}}\right]\ ,
\end{align}
as promised.
\end{proof}

\subsection{Lower bound}

\begin{lemma}
The success probability \emph{with} post-measurement information is at least as large as the success probability of 
a derived problem \emph{without} post-measurement information, i.e.,
$$
\psucPI(\ens,P) \geq \max_{\vec{y}} \psuc(\ens_{T_{\vec{y}}},P)\ .
$$
\end{lemma}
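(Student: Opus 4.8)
The plan is to prove the inequality by a direct ``sub-POVM'' construction: for each short vector $\vec{y}$ I will turn an optimal measurement for the derived standard discrimination problem on $\ens_{T_{\vec{y}}}$ into a \emph{feasible} measurement for the post-measurement-information SDP whose primal value equals $\psuc(\ens_{T_{\vec{y}}},P)$, and then invoke the fact that $\psucPI(\ens,P)$ is the supremum of the primal objective over all feasible POVMs. Before doing so I would rewrite the post-measurement-information objective under the standing assumption that $x$ is uniform and independent of $b$, i.e. $p_{x^{(b)}b} = p_b/N$: then the average state in~\eqref{eq:avgState} becomes $\tau_{\vec{x}} = \sum_{b} p_{x^{(b)}b}\,\rho_{x^{(b)}b} = \tfrac{1}{N}\sum_b p_b\,\rho_{x^{(b)}b} = \tfrac{1}{N}\rho_{\vec{x}}$ with $\rho_{\vec{x}}$ as in~\eqref{eq:avgXState}, so the primal objective is $v_{\rm primal} = \tfrac{1}{N}\sum_{\vec{x}}\tr(M_{\vec{x}}\rho_{\vec{x}})$.

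Next, fix $\vec{y}$ and let $\{N_{\vec{x}}\}_{\vec{x}\in T_{\vec{y}}}$ be an optimal POVM for discriminating the states $\ens_{T_{\vec{y}}} = \{\rho_{\vec{x}}\mid \vec{x}\in T_{\vec{y}}\}$ with uniform prior $1/N$, so that $\sum_{\vec{x}\in T_{\vec{y}}} N_{\vec{x}} = \id$ and $\psuc(\ens_{T_{\vec{y}}},P) = \tfrac{1}{N}\sum_{\vec{x}\in T_{\vec{y}}}\tr(N_{\vec{x}}\rho_{\vec{x}})$. Define $M_{\vec{x}} = N_{\vec{x}}$ for $\vec{x}\in T_{\vec{y}}$ and $M_{\vec{x}} = 0$ for $\vec{x}\notin T_{\vec{y}}$. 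Since $|T_{\vec{y}}| = N$ and the $N_{\vec{x}}$ already resolve the identity, $\{M_{\vec{x}}\}_{\vec{x}\in\cX^{\times L}}$ is a valid POVM, hence feasible for the post-measurement-information SDP, and its objective value is $\tfrac{1}{N}\sum_{\vec{x}\in T_{\vec{y}}}\tr(N_{\vec{x}}\rho_{\vec{x}}) = \psuc(\ens_{T_{\vec{y}}},P)$. Therefore $\psucPI(\ens,P)\ge\psuc(\ens_{T_{\vec{y}}},P)$, and maximizing over $\vec{y}$ yields the claim.

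The honest assessment of the obstacle: there is no serious difficulty here — the argument is just ``restrict to a sub-index-set and pad with zeros'', and essentially amounts to observing that Bob can, for any $\vec{y}$, commit in advance to only ever output guesses lying in $T_{\vec{y}}$, which reduces his task to an ordinary discrimination of the $N$ states $\rho_{\vec{x}}$. The only point requiring care is the normalization bookkeeping: one must check that the factor $1/N$ appearing in $\tau_{\vec{x}}$ is exactly the uniform prior $1/N$ defining the derived problem (this is where $|T_{\vec{y}}| = N$ gets used), and make the trivial-but-worth-stating observation that extending an identity-resolving family by zero operators over the exponentially many remaining vectors $\vec{x}$ still gives a legitimate POVM. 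I would also remark that the specific cyclic-shift structure of $T_{\vec{y}}$ plays no role in this direction — any $N$-element subset would do — and matters only for the converse/tightness statements proved later for Clifford encodings.
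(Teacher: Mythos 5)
Your proof is correct and follows essentially the same route as the paper's: the paper's one-line argument simply decomposes the primal objective $\sum_{\vec{x}}\tr(M_{\vec{x}}\rho_{\vec{x}})$ over the partitions $T_{\vec{y}}$, which is precisely your ``pad an optimal POVM for $\ens_{T_{\vec{y}}}$ with zeros'' construction made explicit. Your version is actually more careful than the paper's about the $1/N$ normalization bookkeeping, and your remark that the cyclic-shift structure of $T_{\vec{y}}$ is irrelevant for this direction is accurate.
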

\begin{proof}
This follows immediately from the discussion by noting that
\begin{align}
\sum_{\vec{x}} 
\tr\left(M_{\vec{x}}\rho_{\vec{x}}\right)
= \sum_{\vec{y}} \sum_{\vec{x} \in P_{\vec{y}}}
\tr\left(M_{\vec{x}}\rho_{\vec{x}}\right)\ .
\end{align}
\end{proof}

\section{Proofs of Section~\ref{sec:tightBounds}}
\subsection{Without post-measurement information}

\begin{lemma}\label{lem:Qsum}
For the measurement defined by~\eqref{eq:meas} we have
\begin{align}
Q & = 
\frac{1}{2}\left(\rho_{\vec{x}} M_{\vec{x}} + \rho_{\vec{\underline{x}}} M_{\vec{\underline{x}}}\right)
 =  \frac{1}{2d}\left(1 + \|\vec{v}_\vx\|_2\right)\id\ ,
\end{align}
and hence $Q$ is Hermitian.
\end{lemma}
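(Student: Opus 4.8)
The plan is to compute the operator $Q = \frac{1}{2}(\rho_{\vec{x}} M_{\vec{x}} + \rho_{\vec{\underline{x}}} M_{\vec{\underline{x}}})$ directly by expanding everything in terms of the Clifford generators and using the anti-commutation relations to collapse the many cross terms. First I would write $\rho_{\vec{x}} = \frac{1}{d}(\id + \sum_j v_{\vec{x}}^{(j)} \Gamma_j)$, where $\vec{v}_{\vec{x}} = \sum_b p_b \gamma_{x^{(b)}b}$ is the average Bloch-type vector — this follows because $\rho_{\vec{x}} = \sum_b p_b \rho_{x^{(b)}b}$ is a convex combination of the $\rho_{xb} = \frac{1}{d}(\id + \sum_j \gamma_{xb}^{(j)}\Gamma_j)$ and the identity component has total weight $\sum_b p_b = 1$. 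Similarly $\rho_{\vec{\underline{x}}} = \frac{1}{d}(\id - \sum_j v_{\vec{x}}^{(j)} \Gamma_j)$ using $\gamma_{(1-x)b} = -\gamma_{xb}$, and the measurement operators are $M_{\vec{x}} = \frac{1}{2}(\id + \sum_j a_{\vec{x}}^{(j)}\Gamma_j)$, $M_{\vec{\underline{x}}} = \frac{1}{2}(\id - \sum_j a_{\vec{x}}^{(j)}\Gamma_j)$ with $\vec{a}_{\vec{x}} = \vec{v}_{\vec{x}}/\|\vec{v}_{\vec{x}}\|_2$.

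Next I would observe that, by symmetry, $\rho_{\vec{x}} M_{\vec{x}}$ and $\rho_{\vec{\underline{x}}} M_{\vec{\underline{x}}}$ are related by flipping the sign of all the $\Gamma_j$ terms simultaneously, so when we add them the terms that are odd in the generators (the lone $\sum_j v^{(j)}\Gamma_j$ piece from $\rho$ and the lone $\sum_j a^{(j)}\Gamma_j$ piece from $M$) cancel, leaving $Q = \frac{1}{d}\left(\id + \big(\sum_j v_{\vec{x}}^{(j)}\Gamma_j\big)\big(\sum_k a_{\vec{x}}^{(k)}\Gamma_k\big)\right)$. The remaining task is to evaluate $\big(\sum_j v^{(j)}\Gamma_j\big)\big(\sum_k a^{(k)}\Gamma_k\big)$: split into the diagonal part $j=k$, where $\Gamma_j^2 = \id$ gives $\sum_j v^{(j)}a^{(j)}\id = (\vec{v}\cdot\vec{a})\id = \|\vec{v}_{\vec{x}}\|_2\,\id$ (since $\vec{a}_{\vec{x}}$ is $\vec{v}_{\vec{x}}$ normalized), and the off-diagonal part $j\neq k$, which is $\sum_{j<k}(v^{(j)}a^{(k)} - v^{(k)}a^{(j)})\Gamma_j\Gamma_k$ after using anti-commutativity to pair up $(j,k)$ with $(k,j)$ — and this vanishes because $\vec{a}_{\vec{x}}$ is parallel to $\vec{v}_{\vec{x}}$, so every antisymmetric combination $v^{(j)}a^{(k)} - v^{(k)}a^{(j)}$ is zero. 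Plugging back in gives $Q = \frac{1}{d}(\id + \|\vec{v}_{\vec{x}}\|_2\,\id) = \frac{1}{2d}\cdot 2(1 + \|\vec{v}_{\vec{x}}\|_2)\id$; rewriting, $Q = \frac{1}{2d}(1 + \|\vec{v}_{\vec{x}}\|_2)\cdot\ldots$ — I would be careful with the factor of $\tfrac12$ in $M$ and the $\tfrac12$ in the definition of $Q$, which combine to match the stated $\frac{1}{2d}(1 + \|\vec{v}_{\vec{x}}\|_2)\id$. Since this is a real multiple of the identity, $Q$ is manifestly Hermitian, which is the claim.

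The main obstacle is really just bookkeeping: making sure the sign-flip symmetry argument is stated cleanly so that the odd-order terms are seen to cancel without writing out all four products, and getting the normalization factors ($\frac{1}{d}$ from each $\rho$, $\frac{1}{2}$ from each $M$, $\frac{1}{2}$ in front of $Q$, and the doubling when two equal terms are summed) to land on exactly $\frac{1}{2d}(1+\|\vec{v}_{\vec{x}}\|_2)\id$. The one genuinely substantive point — and the only place the hypotheses do real work — is that the off-diagonal Clifford terms drop out precisely because $\vec{a}_{\vec{x}} \parallel \vec{v}_{\vec{x}}$; I would make sure to highlight that, since it is exactly why this particular guessed measurement (rather than some other) produces a scalar $Q$.
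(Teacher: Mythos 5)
Your proposal is correct and follows essentially the same route as the paper: expand $\rho_{\vec{x}}$, $\rho_{\vec{\underline{x}}}$, $M_{\vec{x}}$, $M_{\vec{\underline{x}}}$ in the Clifford basis, let the terms linear in the $\Gamma_j$ cancel, and reduce $(\vec{v}_{\vec{x}}\cdot\vec{\Gamma})(\vec{a}_{\vec{x}}\cdot\vec{\Gamma})$ to $(\vec{v}_{\vec{x}}\cdot\vec{a}_{\vec{x}})\id=\|\vec{v}_{\vec{x}}\|_2\,\id$ via the anti-commutation relations (your explicit remark that the antisymmetric part vanishes because $\vec{a}_{\vec{x}}\parallel\vec{v}_{\vec{x}}$ is exactly what justifies the paper's symmetrization step $\tfrac12\sum_{jk}v^{(j)}a^{(k)}\{\Gamma_j,\Gamma_k\}$). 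Just fix the bookkeeping you already flagged: the sum $\rho_{\vec{x}}M_{\vec{x}}+\rho_{\vec{\underline{x}}}M_{\vec{\underline{x}}}$ equals $\tfrac{1}{d}(\id+\|\vec{v}_{\vec{x}}\|_2\id)$, and the extra prefactor $\tfrac12$ in the definition of $Q$ then yields precisely $\tfrac{1}{2d}(1+\|\vec{v}_{\vec{x}}\|_2)\id$.
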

\begin{proof}
We use the shorthand $\vec{a}_\vx \cdot \vec{\Gamma} = \sum_j a^{(j)}_\vx \Gamma_j$. We have
\begin{align}
\rho_{\vec{x}} M_{\vec{x}} &= \frac{1}{2d} \left(\id + (\vec{v}_\vx + \vec{a}_\vx)\cdot \vec{\Gamma} + (\vec{v}_\vx \cdot \vec{a}_\vx)\id\right)\ ,\\
\rho_{\vec{\underline{x}}} M_{\vec{\underline{x}}} &= \frac{1}{2d} \left(\id - (\vec{v}_\vx + \vec{a}_\vx)\cdot \vec{\Gamma} + (\vec{v}_\vx \cdot \vec{a}_\vx)\id\right)\ ,
\end{align}
where the equality follows from the fact that 
\begin{align}
(\vec{v}_\vx \cdot \vec{\Gamma})(\vec{a}_\vx \cdot \vec{\Gamma}) &= \frac{1}{2}\sum_{jk} v^{(j)}_\vx a^{(k)}_\vx \{\Gamma_j,\Gamma_k\}\ ,\\
&= (\vec{v}_\vx \cdot \vec{a}_\vx)\id\ .
\end{align}
Using that $\vec{v}_\vx \cdot \vec{v}_\vx = \|\vec{v}_\vx\|_2^2$ gives our claim.
\end{proof}

\begin{lemma}\label{lem:eigenvalues}
The largest eigenvalue of $\rho_{\vec{x}}$ and $\rho_{\vec{\underline{x}}}$ is given by
\begin{align}
\lambda_{\rm max}(\rho_{\vec{x}}) = \lambda_{\rm max}(\rho_{\vec{\underline{x}}}) = \frac{1}{d}\left(1 + \|\vec{v}_\vx\|_2\right)\ .
\end{align}
\end{lemma}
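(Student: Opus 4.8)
The plan is to diagonalise $\rho_{\vec{x}}$ explicitly, exploiting the Clifford structure in exactly the same way as in the proof of Lemma~\ref{lem:Qsum}. First I would rewrite the average state: since $\sum_b p_b = 1$ and each $\rho_{x^{(b)}b} = \frac1d(\id + \gamma_{x^{(b)}b}\cdot\vec{\Gamma})$, we have
\begin{align}
\rho_{\vec{x}} = \sum_{b=1}^L p_b\,\rho_{x^{(b)}b} = \frac{1}{d}\left(\id + \vec{v}_\vx\cdot\vec{\Gamma}\right)\ ,
\end{align}
with $\vec{v}_\vx = \sum_{b} p_b\,\gamma_{x^{(b)}b}$ as defined. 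Hence the eigenvalues of $\rho_{\vec{x}}$ are precisely $\frac1d(1+\mu)$, where $\mu$ runs over the eigenvalues of the Hermitian operator $\vec{v}_\vx\cdot\vec{\Gamma}$.

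The key step is the observation — already used in Lemma~\ref{lem:Qsum} — that because the generators anticommute and satisfy $\Gamma_j^2=\id$,
\begin{align}
(\vec{v}_\vx\cdot\vec{\Gamma})^2 = \frac{1}{2}\sum_{jk} v^{(j)}_\vx v^{(k)}_\vx \{\Gamma_j,\Gamma_k\} = \|\vec{v}_\vx\|_2^2\,\id\ .
\end{align}
Thus $\vec{v}_\vx\cdot\vec{\Gamma}$ squares to a scalar multiple of the identity, so its only possible eigenvalues are $\pm\|\vec{v}_\vx\|_2$ (and the single eigenvalue $0=\pm\|\vec{v}_\vx\|_2$ in the degenerate case $\vec{v}_\vx = 0$). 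In every case $\lambda_{\rm max}(\vec{v}_\vx\cdot\vec{\Gamma}) = \|\vec{v}_\vx\|_2$, which gives $\lambda_{\rm max}(\rho_{\vec{x}}) = \frac1d(1+\|\vec{v}_\vx\|_2)$.

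For the complementary vector I would invoke the hypothesis $\gamma_{xb} = -\gamma_{(1-x)b}$ on Clifford encodings, which yields $\vec{v}_{\vec{\underline{x}}} = \sum_b p_b\,\gamma_{(1-x^{(b)})b} = -\vec{v}_\vx$; since the Euclidean norm is invariant under negation, $\|\vec{v}_{\vec{\underline{x}}}\|_2 = \|\vec{v}_\vx\|_2$ and therefore $\lambda_{\rm max}(\rho_{\vec{\underline{x}}}) = \frac1d(1+\|\vec{v}_\vx\|_2) = \lambda_{\rm max}(\rho_{\vec{x}})$, as claimed. There is no real obstacle here; the only minor point is to mention the degenerate case $\vec{v}_\vx = 0$ so that the statement ``its eigenvalues are $\pm\|\vec{v}_\vx\|_2$'' is literally correct.
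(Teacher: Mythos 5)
Your proof is correct and takes a genuinely different --- and arguably more self-contained --- route than the paper's. The paper uses the variational characterization $\lambda_{\rm max}(\rho_{\vec{x}}) = \max_\sigma \tr(\sigma\rho_{\vec{x}})$ over states $\sigma$, expands $\sigma$ in the orthonormal basis $\{\id,\Gamma_j,i\Gamma_j\Gamma_k,\ldots\}$, and then invokes the externally cited fact that positivity of $\sigma$ forces $\|\vec{s}\|_2\leq 1$ on its vector part, so that $\max_{\vec{s}}\frac{1}{d}(1+\vec{s}\cdot\vec{v}_\vx) = \frac{1}{d}(1+\|\vec{v}_\vx\|_2)$. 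You instead diagonalize directly via $(\vec{v}_\vx\cdot\vec{\Gamma})^2=\|\vec{v}_\vx\|_2^2\,\id$, which avoids any appeal to the positivity characterization and yields strictly more information (the full spectrum $\frac{1}{d}(1\pm\|\vec{v}_\vx\|_2)$). One small step is missing, however: $(\vec{v}_\vx\cdot\vec{\Gamma})^2=\|\vec{v}_\vx\|_2^2\,\id$ only restricts the eigenvalues to the set $\{\pm\|\vec{v}_\vx\|_2\}$; it does not by itself exclude the possibility that \emph{only} $-\|\vec{v}_\vx\|_2$ occurs (i.e.\ $\vec{v}_\vx\cdot\vec{\Gamma}=-\|\vec{v}_\vx\|_2\,\id$), in which case $\lambda_{\rm max}$ would be $-\|\vec{v}_\vx\|_2$. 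To close this, note that each generator is traceless: for $k\neq j$ anticommutation gives $\Gamma_j=-\Gamma_k\Gamma_j\Gamma_k$, so cyclicity of the trace yields $\tr(\Gamma_j)=-\tr(\Gamma_j)=0$. Hence $\tr(\vec{v}_\vx\cdot\vec{\Gamma})=0$ and, whenever $\vec{v}_\vx\neq 0$, both eigenvalues $\pm\|\vec{v}_\vx\|_2$ occur (each with multiplicity $d/2$), so $\lambda_{\rm max}(\vec{v}_\vx\cdot\vec{\Gamma})=\|\vec{v}_\vx\|_2$ as you claim. Your handling of the degenerate case $\vec{v}_\vx=0$ and of $\rho_{\vec{\underline{x}}}$ via $\vec{v}_{\vec{\underline{x}}}=-\vec{v}_\vx$ is fine and matches the paper's ``analogous argument'' remark.
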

\begin{proof}
We now show that our claim for $\rho_{\vec{x}}$.
Our goal is to evaluate 
\begin{align}
\lambda_{\rm max}(\rho_{\vec{x}}) = \max_{\sigma} \tr(\sigma \rho_{\vec{x}})\ ,
\end{align}
where the maximization is taken over all states $\sigma$. Using the fact that
the set of operators $\{\id,\Gamma_j, i \Gamma_j \Gamma_k,\ldots\}_{jk\ldots}$ forms
an orthonormal (with respect to the Hilbert-Schmidt inner product) basis
for the $d \times d$ Hermitian matrices we can write
\begin{align}
\sigma = \frac{1}{d}\left(\id + \sum_j s^{(j)} \Gamma_j + \ldots \right)\ .
\end{align}
Since $\tr(\Gamma_j \Gamma_k) = 0$ for $j \neq k$, and we can rewrite $\rho_{\vec{x}} = \frac{1}{d}\left(\id + \sum_j a^{(j)}_\vx \Gamma_j\right)$ this gives us 
\begin{align}\label{eq:maxProblem}
\tr(\sigma \rho_{\vec{x}}) = \frac{1}{d}\left(1 + \vec{s} \cdot \vec{v}_\vx\right)\ ,
\end{align}
where $\vec{s} = (s^{(1)},\ldots,s^{(2n+1)})$ and $\cdot$ denotes the Euclidean inner product.
Since $\sigma \geq 0$ if and only if $\|\vec{s}\|_2 \leq 1$~\cite{ww:pistar} we have that the maximum in~\eqref{eq:maxProblem}
is attained for $\sigma = (\id + \sum_j s^{(j)} \Gamma_j)/d$ with
\begin{align}
\vec{s} = \frac{\vec{v}_\vx}{\|\vec{v}_\vx\|_2}\ , 
\end{align}
which gives our claim.
The argument for $\rho_{\vec{\underline{x}}} = \frac{1}{d}\left(\id - \sum_j a^{(j)}_\vx \Gamma_j\right)$ is analogous.
\end{proof}

\subsection{With post-measurement information}

\begin{lemma}
For our class of problems 
\begin{align}
\psucPI(\ens,P) = \max_{\vec{x}} \psuc(\ens_{\tilde{T}_{\vec{x}}})\ ,
\end{align}
and post-measurement information is useless if and only if the maximum on the r.h.s. is attained by $\vec{x} = (0,\ldots,0)$.
\end{lemma}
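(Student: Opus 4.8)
The plan is to establish the two inequalities that pin down $\psucPI(\ens,P)$. The direction $\psucPI(\ens,P) \geq \max_{\vec{x}} \psuc(\ens_{\tilde{T}_{\vec{x}}})$ is immediate from Lemma~\ref{lem:lowerBound} applied to the partitions $\tilde T_{\vec x} = \{\vec x, \vec{\underline x}\}$, so the work is the reverse inequality $\psucPI(\ens,P) \leq \max_{\vec{x}} \psuc(\ens_{\tilde{T}_{\vec{x}}})$. First I would use the dual SDP from Section~\ref{sec:bounds}: any feasible $Q$ with $Q \geq \tau_{\vec x}$ for all $\vec x$ upper-bounds $\psucPI$ by $\tr Q$. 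Here, since $x\in\{0,1\}$ is uniform, $\tau_{\vec x} = \frac{1}{2}\sum_b p_b \rho_{x^{(b)}b} = \frac{1}{2}\rho_{\vec x}$ with $\rho_{\vec x}$ as in~\eqref{eq:avgXState}. The key computation, already done in Lemma~\ref{lem:Qsum} and Lemma~\ref{lem:eigenvalues}, is that the average state has the Clifford-algebra form $\rho_{\vec x} = \frac1d(\id + \vec v_{\vec x}\cdot\vec\Gamma)$ with $\lambda_{\rm max}(\rho_{\vec x}) = \lambda_{\rm max}(\rho_{\vec{\underline x}}) = \frac1d(1 + \|\vec v_{\vec x}\|_2)$, and crucially $\vec v_{\vec{\underline x}} = -\vec v_{\vec x}$, so $\rho_{\vec x}$ and $\rho_{\vec{\underline x}}$ have the \emph{same} largest eigenvalue.

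The main idea is then: let $\vec x^*$ be the vector maximizing $\|\vec v_{\vec x}\|_2$ over all $\vec x \in \{0,1\}^{\times L}$, and set $Q^* = \frac{1}{2d}(1 + \|\vec v_{\vec x^*}\|_2)\id$. I claim $Q^*$ is dual-feasible: for every $\vec x$ we need $Q^* \geq \frac12 \rho_{\vec x}$, i.e. $\frac{1}{2d}(1 + \|\vec v_{\vec x^*}\|_2)\id \geq \frac{1}{2d}(\id + \vec v_{\vec x}\cdot\vec\Gamma)$, which follows because $\vec v_{\vec x}\cdot\vec\Gamma$ has operator norm $\|\vec v_{\vec x}\|_2 \leq \|\vec v_{\vec x^*}\|_2$ (using that the $\Gamma_j$ anti-commute, so $(\vec v\cdot\vec\Gamma)^2 = \|\vec v\|_2^2\id$). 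Hence $\psucPI(\ens,P) \leq \tr Q^* = \frac12(1 + \|\vec v_{\vec x^*}\|_2)$. On the other hand, by Lemma~\ref{lem:Qsum} and the Helstrom-type argument of the preceding theorem, $\psuc(\ens_{\tilde T_{\vec x^*}}) = \tr\bigl(\frac12(\rho_{\vec x^*} M_{\vec x^*} + \rho_{\vec{\underline{x}^*}} M_{\vec{\underline{x}^*}})\bigr) = \frac12(1 + \|\vec v_{\vec x^*}\|_2)$, the optimal two-state discrimination value for the partition $\tilde T_{\vec x^*}$. So both bounds meet at $\frac12(1 + \|\vec v_{\vec x^*}\|_2) = \max_{\vec x}\psuc(\ens_{\tilde T_{\vec x}})$, giving equality. (I would also note, for the upper-bound inequality, the alternative route: using that $\sum_{\vec{\tilde x}} M_{\vec{\tilde x}} = \id$ and $\frac12\rho_{\vec{\tilde x}} \leq Q^*$ directly, $\psucPI = \sum_{\vec{\tilde x}}\tr(M_{\vec{\tilde x}}\tau_{\vec{\tilde x}}) \leq \tr Q^*$.)

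For the second assertion, recall that post-measurement information being useless means $\psuc(\ens,P) = \psucPI(\ens,P)$, and that $\psuc(\ens,P)$ — the success probability \emph{without} post-measurement information — corresponds exactly to the partition $\vec y = (0,\ldots,0)$, i.e. $\psuc(\ens,P) = \psuc(\ens_{\tilde T_{(0,\ldots,0)}})$ as observed just before Lemma~\ref{lem:lowerBound}. So by the equality just proved, $\psuc(\ens,P) = \psucPI(\ens,P)$ iff $\psuc(\ens_{\tilde T_{(0,\ldots,0)}}) = \max_{\vec x}\psuc(\ens_{\tilde T_{\vec x}})$, i.e. iff the maximum on the right-hand side is attained at $\vec x = (0,\ldots,0)$, which is the claim.

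I expect the only subtle point to be making sure the dual-feasibility check covers \emph{all} $\vec x \in \{0,1\}^{\times L}$, not just those in the winning partition — but this is handled cleanly by the uniform operator-norm bound $\|\vec v_{\vec x}\cdot\vec\Gamma\| = \|\vec v_{\vec x}\|_2 \leq \max_{\vec x}\|\vec v_{\vec x}\|_2$, which is exactly why the Clifford structure makes this work. A secondary bookkeeping point is the reduction $\tau_{\vec x} = \frac12\rho_{\vec x}$, which uses the uniform-and-independent distribution on $x$; without it the averaged states would not have the clean Bloch-vector form and the argument would break, so I would state that assumption explicitly at the start.
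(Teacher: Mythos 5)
Your proposal is correct and follows essentially the same route as the paper: both hinge on Lemma~\ref{lem:Qsum} and Lemma~\ref{lem:eigenvalues} to exhibit the dual-feasible $Q = \frac{1}{2d}(1+\|\vec{v}_{\vec{x}^*}\|_2)\id$ whose trace matches the Helstrom value $\frac{1}{2}(1+\|\vec{v}_{\vec{x}^*}\|_2)$ of the best partition, and both identify the partition $(0,\ldots,0)$ with standard state discrimination to get the ``useless'' criterion. The only (cosmetic) difference is that you invoke weak duality directly while the paper routes the same argument through the optimality conditions of Lemma~\ref{lem:conditions}.
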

\begin{proof}
Let $\vec{x}$ be the string that achieves the optimum on the r.h.s of~\eqref{eq:rhs}.
We now claim that $Q = \frac{1}{2}\left(\rho_{\vec{x}} M_{\vec{x}} + \rho_{\vec{\underline{x}}} M_{\vec{\underline{x}}}\right)$
is an optimal solution to the SDP for the problem of state discrimination \emph{with} post-measurement information.
First of all, note that Lemma~\ref{lem:Qsum} gives us that $Q$ is Hermitian. We then have by Lemma~\ref{lem:eigenvalues}
that $Q \geq \frac{1}{2}\rho_{\vec{\tilde{x}}}$ \emph{for all} possible $\vec{\tilde{x}}$. Our claim now follows from Lemma~\ref{lem:conditions},
and by noting that for the partition $\vec{x} = (0,\ldots,0)$ we will always give the same answer, no matter what post-measurement information we receive later on.
\end{proof}
\end{document}